\newtheorem{theorem}{Theorem}
\newtheorem{lemma}{Lemma}
\newtheorem{corollary}{Corollary}
\newtheorem{definition}{Definition}
\newtheorem{remark}{Remark}
\newtheorem{property}{Property}
\newcommand{\bs}[1]{\boldsymbol{#1}}
\def \cX {\mathcal{X}}
\def \bS {\boldsymbol{S}}
\def \cB {\mathcal{B}}
\def \cF {\mathcal{F}}
\def \cV {\mathcal{V}}
\def \cS {\mathcal{S}}
\def \cT {\mathcal{T}}
\def \vmin {v_{\min}}
\def \indic {\mathbbm{1}}
\DeclareMathOperator{\?}{?}
\DeclareMathOperator{\cost}{cost}
\DeclareMathOperator{\parent}{parent}
\DeclareMathOperator{\parentone}{parent_{\cB_1}}
\DeclareMathOperator{\temp}{temp}
\begin{document}

	\title{Optimal Policies for Age and Distortion in a Discrete-Time Model}
		\author{\IEEEauthorblockN{Yunus \.Inan, \IEEEmembership{Student Member, IEEE}, Reka Inovan, \IEEEmembership{Student Member, IEEE}, Emre Telatar, \IEEEmembership{Fellow, IEEE}}\thanks{The authors are with  \'{E}cole Polytechnique F\'{e}d\'{e}rale de Lausanne (EPFL), 1015 Lausanne, Switzerland. Emails: \{yunus.inan, reka.inovan, emre.telatar\}@epfl.ch.\newline A short version of this work is presented at IEEE ITW 2021 \cite{InanITW}.}}
	
	\maketitle 
	\begin{abstract}
We study a discrete-time model where each packet has a cost of not being sent --- this cost might depend on the packet content. We study the tradeoff between the age and the cost where the sender is confined to packet-based strategies. The optimal tradeoff is found by an appropriate formulation of the problem as a Markov Decision Process (MDP). We show that the optimal tradeoff can be attained with finite-memory policies and we devise an efficient policy iteration algorithm to find these optimal policies. We further study a related problem where the transmitted packets are subject to erasures. We show that the optimal policies for our problem are also optimal for this new setup. Allowing coding across packets significantly extends the packet-based strategies. We show that when the packet payloads are small, the performance can be improved by coding.

	\end{abstract}
	\begin{IEEEkeywords}
		Age of Information, Distortion, Markov Decision Process, Policy Iteration
	\end{IEEEkeywords}
\vspace*{-2pt}
	\section{Introduction}
	Timeliness of information is a crucial aspect of communications. Stale data may have highly undesirable effects; think, for example, of sensor output for self-driving vehicles, position of an airplane, coolant temperature in a power plant, etc.  This aspect of data is nicely captured by the recently studied notion of Age-of-Information (AoI), by shifting the focus from delay to freshness.  At the same time, not all data is equally important.  If, in an attempt to reduce staleness our system drops important pieces of data, the remedy may be worse than the disease.  In this work, we study a simple setup where the freshness and importance aspects may be treated together.
	
	The loss, or misrepresentation of data and assigning higher cost to more important data is well-captured by the tools of rate-distortion theory.  As said above, the question of freshness has been an object of study in the AoI literature initiated by Kaul et al. \cite{Kaul}. Since the introduction of AoI, there has been various uses of this metric in many applications. In this work, we quantify the notion of importance by using a distortion metric. We analyze a discrete-time model which allows us to analyze the tradeoff between timeliness as measured by AoI and the distortion of the data. The tradeoff can be studied by casting it as a problem of finding the optimal policy of a Markov Decision Process (MDP) which identifies the packets to be dropped. We show that the optimal policy for this MDP can be achieved by a system with finite memory and we also provide an explicit algorithm to compute this policy, which also turns out to be an optimal policy of a problem where the sender transmits packets over an erasure channel with feedback. Lastly, for packets with small payload, we show that one can improve the performance with variable-to-fixed length coding techniques, such as Tunstall coding. 
	
	\section{Related Work}
	
	Freshness of data is recently recognized as a semantic aspect of communication. Initial work by Kaul et al. introduced the AoI as a metric to quantify this aspect \cite{Kaul,Kaul2,Kaul3}. Following Kaul et al., there have been many studies adopting AoI as a freshness metric. The first strand involved calculation of AoI in simple schemes, e.g. M/M/1 queues \cite{Kaul}. Subsequent extensions involve more general queues \cite{8406909,GG11,generalFormula,MMinfty}, multiple source streams \cite{multisource_Yates,multiplesource,MG11_multistream,8254578}, various queue management techniques such as the Last-Come-First-Served (LCFS) protocol \cite{Updateswithqueues}, and models that allow packet discards \cite{packetmanagement,9517796,8406966,Rajai} and deadlines \cite{deadline}. A partial list of studies that seek to compute or to minimize the age under energy or link constraints is \cite{8123937,7283009,8437904,7308962,8254156,8733195,8335672,8006703,8437547,8406974,8422086,8437573,distGaussian,8377368,wang2018skip,Inan2206:Age}. For a comprehensive survey over the AoI literature, see \cite{yates2020ageSurvey}; and for a tutorial, see \cite{tutorial}.  
	
	Although the works cited above mostly assume error-free transmissions, some others take into account that packets get lost or erased while passing through the network. In \cite{AoIerror}, the AoI is studied in a model where transmissions are error-prone. One may notice that a simple method to combat erasures is to send the packet repeatedly. A more complicated method could send coded packets, which are then to be conveyed through the erasure channel. A list of works concerning coded transmissions with feedback is \cite{8761668,8362277,9440981,8849636,7925903,8006541,8445909,8006504}. An example of a study assuming no feedback is \cite{ElieErasure}, where the authors find the optimal coding strategy. In Section \ref{sec:erasure}, we will discuss that the optimal strategy for our discrete-time model is also optimal for a model where communications take place over an erasure channel with feedback.
	
	AoI has also found place in stochastic control literature. For instance, in \cite{9029447}, a tradeoff between the information staleness and performance of a Linear Quadratic Regulator (LQR) is illustrated. In \cite{8814627}, freshness is taken as basis for an algorithm devised for distributed tracking of a linear system. These works are aligned in the sense that using the fresh data to track and control a system might improve the performance. This is because of the Markovian nature of the processes that are tracked --- the next state depends only on the freshest data. However, this may not always be the case as the freshest may not be the most important. This observation is in line with some further studies. For instance in \cite{Wiener_estimate}, a problem of generating timely updates in a remote estimation setting has been proposed. The authors have investigated the Mean-square-optimal and AoI-optimal strategies for the estimation of a Wiener process through a queue and concluded that they are different; consequently demonstrating a tradeoff between freshness and importance. In \cite{OU_estimate}, the authors generalized the settings to include an Ornstein--Uhlenbeck process. In \cite{distGaussian}, a tradeoff between timeliness and distortion is shown for the case of estimation through a Gaussian channel in a power constrained setting. There also has been several works on integrating the notion of different data importance and timeliness, e.g., by introducing non-linear cost to stale data \cite{AoI_nonlinear,Cost_of_delay}, by considering separate data streams of different priorities \cite{8437591,8886357}, or by modeling the distortion as a decreasing function of the service time \cite{8988940}.
	
Our setup contains flavors from the above approaches, yet it features novel aspects. For instance, the resource constraint is imposed by an external scheduler, giving the sender turns to speak. Hence, as opposed to the works \cite{Wiener_estimate,OU_estimate}, the sender cannot decide when to send; but it rather decides what to send. Furthermore, we adopt the view that data is formed into packets of different importance levels, e.g., packets containing abnormal levels of coolant temperature in a nuclear plant could be classified as important. Consequently, the distortion metric we propose depends on whether the packets are received or not, and the accumulated importance levels of the missed data constitutes our distortion metric. 

	\section{Notation}
	Random variables are denoted with uppercase letters (e.g., $X$); and vectors are denoted with boldface letters (e.g., $\bs{b}$). Sets are denoted with script-style letters (e.g., $\cV$). $l(\bs{b})$ is the length of a vector $\bs{b}$, and $b_i$ is its $i^\text{th}$ element. For vectors $\bs{b}$ and $\bs{b}'$, $\bs{b}\| \bs{b}' := [b_1,b_2,\ldots,b'_1,b'_2,\ldots]$ is the concatenation of the two vectors. $\bs{b}_{\geq i} := [b_i,\dots,b_l]$ is segment of $\bs{b}$ from its $i^\text{th}$ element until the end; and $\bs{b}_{i}^j : = [b_i,\dots,b_j]$ is the segment between its $i^\text{th}$ and $j^\text{th}$ elements, $\bs{b}^i := \bs{b}_1^i$ . $\bs{b}'$ is a suffix of $\bs{b}$ if there exists an $i > 1$ such that $\bs{b}' =\bs{b}_{\geq i}$. If $\bs{b}' = \bs{b}_{\geq i}$ is suffix of $\bs{b}$, then $\bs{b}\setminus\bs{b}' = \bs{b}^{i-1}$.  For $a,b \in \mathbb{R}$, $a\wedge b := \min\{a,b\}$, and  $a \vee b:= \max\{a,b\}$.
	
	\section{Problem Definition}\label{sec:problem_definition}
	
	In this section, we describe our discrete-time model in terms of the data to be conveyed, the sender-receiver pair with their respective communication protocol, and the channel in between.
	
	We assume that the data is formed into packets, and at each time instant $t$, a new packet arrives to the sender. The packet payloads originate from a set of finite elements $\cX$, and the probability of a payload taking a particular value is time-invariant and independent of the past. Consequently, the data is an independent and identically distributed (i.i.d.) process $\{X_t\}_{t \in \mathbb{N}}$. The sender observes $X_t$ at time $t$ and keeps $X_t$ in its buffer. 
	
	The communication protocol is as follows:  The sender is allowed to speak at times $T_1, T_2, \ldots$. The process $\{T_i\}_{i \in \mathbb{N}}$ is independent of the process $\{X_t\}_{t \in \mathbb{N}}$, and has the property that the interspeaking times $Z_i := T_i-T_{i-1}$ are i.i.d.. Moreover, we assume that $Z_i$'s are strictly positive and square integrable, i.e., $\Pr(Z_i > 0) = 1$ and $E[Z_i^2] \leq \infty$. An example of such a random variable could be a geometric random variable with $\Pr(Z_i = t) = p(1-p)^{t-1}$ for $t \geq 1$. The speaking process $\{T_i\}_{i \in \mathbb{N}}$ is inspired by MAC layer protocols where each sender is assigned time slots to speak. When the sender is given a turn to speak, i.e., at each $T_i$, it selects a packet from its buffer with timestamp $S_i \leq T_i$ and forwards $X_{S_i}$. Once $X_{S_i}$ is forwarded, we restrict the sender to not send a packet with timestamp less than $S_i$ at the subsequent speaking times $T_{i+1}, T_{i+2},\ldots$. Note that such restriction results in $S_i < S_{i+1}$. The increasing sequence $\{S_i\}_{i \geq 0} =: \bs{S}$ is henceforth referred as the `selection process'.

	Transmissions between the sender and the receiver are noiseless and zero-delay. Hence, by time $t$, the receiver has observed $X_{S_i}$ for every $i$ such that $T_i < t$. We also suppose that the packets are formed to contain timestamps, i.e., the packet containing $X_{S_i}$ also contains the information that it was generated at time $S_i$. Consequently, at time $t$, the receiver is able to reconstruct the data as $Y_j(t) = X_j$ if $X_j$ is among its observation up to time $t$; otherwise it sets $Y_j(t) = \?$.
	
	
	
	At this point, we have described our model. Now, we introduce the appropriate distortion and timeliness metrics to study their tradeoff. Specifically, given $d: \cX \times \cX \cup \{\?\} \to \mathbb{R}_{\geq 0}$, with
	\begin{equation}
	d(x,x) = 0 \text{ and } d(x,\?) = :v(x),
	\end{equation}
	and given a selection procedure $\bs{S}$, define
	\begin{equation}
	D_t^{(\bs{S})} := \frac 1 {t} \sum_{i=1}^{t} d(X_i,Y_i(t)) \text{\ and\ } D^{(\bs{S})}:= E\bigg[\limsup_{t \to \infty}D_t^{(\bs{S})}\bigg].
	\end{equation}
	With an analogy to rate-distortion theory, observe that $D_t^{(\bs{S})}$ quantifies average distortion between the source and its reconstruction. $D^{(\bs{S})}$ is the expected long-term average distortion.
	
Timeliness of information is quantified with the well-studied AoI metric. Namely, with $i(t):= \sup\{i \geq 0: T_i < t\}$, $T_0=S_0 = 0$; define for all $t > 0$,
		\vspace*{-7pt}
	\begin{equation}\label{label: delta_renewal}
	\Delta_t^{(\bs{S})} := t-S_{i(t)},\text{\ and\ }\Delta^{(\bs{S})} := E\bigg[ \limsup_{t\to \infty}{\frac 1 t \sum_{\tau=1}^t\Delta_\tau^{(\bs{S})}}\bigg]\text.
	\end{equation}
	$\Delta_t^{(\bs{S})}$ is usually referred as the instantaneous age; and similar to $D^{(\bs{S})}$, $\Delta^{(\bs{S})}$ is the expectation of the long-term average age.
	
	
	Note that $Y_i(t)$ can be either equal to $X_i$ or to `$\?$'. Therefore, specifying only $d(x,x)$ and $d(x,\?)$ --- which is readily determined by $v(x)$ --- is sufficient to evaluate $D^{(\bs{S})}$. As a consequence, the sender may base its selection $S_i$ on $\bs{V}^{T_i}$,
	where $V_t := v(X_t)$. 	Therefore, the selection $S_i$ is a map $S_i: \cV^{T_i} \to \{S_{i-1}+1,\ldots,T_i\}$ with $\cV := \{v(x):{x \in \cX}\}$. Intuitively, $V_i$ represents an importance score for the packet $i$; high $V_i$ is interpreted as the content having high importance and not sending it incurs a high penalty --- this interpretation is also consistent with a model where some arrivals are prioritized. Observe that the structure of the problem stays the same if all elements of $\cV$ are multiplied by a positive constant. If $\cV$ does not contain $0$, then without loss of generality one can assume that the minimum element in $\cV$ is $1$ and it is an ordered set as $1 = v_1 < v_2 < \ldots < v_{|\cV|} := v_{\max} < \infty$.
	
	Now that we have the full description of the setting, we aim to characterize the achievable region of $(\Delta^{(\bs{S})},D^{(\bs{S})})$ pairs. We attempt to characterize this region in the sequel and conclude this section with a few remarks.
	
%
%
	
	\begin{itemize}
		\item[(i)]The model we propose is reminiscent of a remote estimation problem of a discrete-time stochastic process through a discrete-time queue. However, we require that the sender sends a packet exactly at speaking times, which is equivalent to force the sender to send as soon as the queue is idle in a discrete-time queueing setting. In \cite{Update_or_wait} and \cite{Wiener_estimate}, it is shown that the optimal policies need not be of this type. This makes our problem different and allows us to make the relaxation that $S_i$ need not be stopping times.
		\item[(ii)]If $0 \in \cV$, there are multiple interpretations. $V_t = 0$ can be interpreted as either the data is totally trivial (need not be reconstructed), or interpreted as the source having not generated data at time $t$. The second interpretation allows us to model a source which generates data sporadically. Now there is the question of allowing $X_t$ to be sent or not. Our model allows sending of $X_t$, i.e., in the second interpretation, informs the receiver that there has not been any data generated by the source, and $\Delta_t$ decreases accordingly. The reduction of $\Delta_t$ can be avoided by appropriate reformulation --- to be discussed in Remark \ref{rem:no_data_sent}.
	\end{itemize}

\section{The Age-Distortion Tradeoff}\label{sec:age_dist_tradeoff}

\subsection{Markov Decision Problem Formulation as a Lower Bound}\label{sec:mdp_formulation}


To study the age-distortion tradeoff, we study the family of weighted costs $\eta\Delta^{(\bs{S})} + D^{(\bs{S})} $ for $\eta > 0$. It is known that the boundary of the achievable $(\Delta^{(\bs{S})}, D^{(\bs{S})})$ region can be characterized by studying this family. We seek to obtain a tractable lower bound for $\eta\Delta^{(\bs{S})} + D^{(\bs{S})}$, and then we further optimize this lower bound over $\bs{S}$. First, we derive a simpler expression for $\Delta^{(\bs{S})}$. Observe that
\begin{equation}
\limsup_{t\to \infty}{\frac 1 t \sum_{\tau=1}^t\Delta_{\tau}^{(\bs{S})}} = \limsup_{i \to \infty} \frac{\sum_{j=1}^iQ_j^{(\bs{S})}}{\sum_{j=1}^iZ_j}
\end{equation}
\centerline{where}
\begin{equation}
Q_j ^{(\bs{S})}:= (T_j-S_j)Z_{j+1} + \frac{Z_{j+1}(Z_{j+1}+1)} 2.
\end{equation}
Since $\displaystyle\lim_{i \to \infty}\textstyle \frac 1 i \sum_{j=1}^i \frac{Z_{j+1}(Z_{j+1}+1)} 2 = \frac 1 2 E[Z_1(Z_1+1)] =: \nu$ and $\lim_{i \to \infty} \frac 1 i \sum_{j=1}^i Z_j = E[Z_1] =:\mu$ with probability 1 by the law of large numbers, we obtain
\begin{equation}
\Delta^{(\bS)} = E\bigg[\frac 1 {\mu}\limsup_{i \to \infty} \frac 1 i \sum_{j=1}^i(T_j-S_j)Z_{j+1} + \frac \nu \mu\bigg].
\end{equation}
Note that $\Delta^{(\bs{S})}$ cannot be smaller than $\nu/\mu$.  We subtract $\nu/\mu$ to obtain the excess age, given by 
\begin{equation}\label{eqn:delta_excess}
\Delta_e^{(\bS)} :=  E\bigg[\frac 1 \mu \limsup_{i \to \infty} \frac 1 i \sum_{j=1}^i(T_j-S_j)Z_{j+1}\bigg]
\end{equation}
and determine the feasible $(\Delta_e^{(\bs{S})}, D^{(\bs{S})})$ pairs. With the same reasoning as above, we study the family $\eta \Delta_e^{(\bS)} + D^{(\bS)}$. When the selection process $\bS$ satisfies a certain square-integrability condition, we can find alternative expressions for $\Delta_e^{(\bS)} $ and $D^{(\bS)}$.

\begin{theorem}\label{thm:simple_age_D} Let $\cS_2$ be the set of selection processes $\bS$ with $\sup_i E[(T_i-S_i)^2] < \infty$. Then for any  $\bS \in \cS_2$,
		\begin{align}
		\Delta_e^{(\bS)}  &=  E\bigg[\limsup_{i \to \infty} \frac 1 i \sum_{j=1}^i(T_j-S_j)\bigg],\label{eq:simple_age}\\
		D^{(\bS)}  &= \frac 1 \mu E\bigg[\limsup_{i \to \infty} \frac 1 i \sum_{j=1}^i D(\bs{V}^{T_j},S_{j-1},S_{j})\bigg]\label{eq:simple_D},
		\end{align}
		\centerline{where}
		\begin{equation}
		D(\bs{V}^{T_j},S_{j-1},S_{j}) := \sum_{j' = S_{j-1}+1}^{S_j-1} V_{j'}
		\end{equation}
		is the penalty incurred by skipping the portion $[V_{S_{j-1}+1},\ldots,V_{S_j-1}]$ of $\bs{V}^{T_j}$.
\end{theorem}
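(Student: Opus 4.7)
The plan is to prove both identities by establishing almost-sure equalities between the Cesàro-type limsups on the two sides and then taking expectations.

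For the age identity \eqref{eq:simple_age}, I would decompose
\[
\frac{1}{i}\sum_{j=1}^i (T_j - S_j) Z_{j+1} \;=\; \mu \cdot \frac{1}{i}\sum_{j=1}^i (T_j - S_j) \;+\; \frac{M_i}{i},
\]
where $M_i := \sum_{j=1}^i (T_j - S_j)(Z_{j+1} - \mu)$. Since the selection $S_j$ depends only on $\bs{V}^{T_j}$ and $Z_1,\ldots,Z_j$, while $Z_{j+1}$ is independent of all of these, the sequence $M_i$ is a martingale with respect to the natural filtration, and its increments satisfy $E[(M_j - M_{j-1})^2] = E[(T_j - S_j)^2]\cdot \mathrm{Var}(Z_1)$, which is uniformly bounded in $j$ thanks to $\bS \in \cS_2$. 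Applying Kronecker's lemma to $\sum_j E[(M_j - M_{j-1})^2]/j^2 < \infty$ (equivalently, the $L^2$ martingale SLLN) yields $M_i/i \to 0$ almost surely; dividing both sides of the decomposition by $\mu$ and taking expectation gives \eqref{eq:simple_age}.

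For the distortion identity \eqref{eq:simple_D}, the starting observation is that $Y_i(t) = X_i$ iff $i \in \{S_1,\ldots,S_{i(t)}\}$; grouping the unsent source indices $i \leq t$ by the speaking interval they belong to gives
\[
\sum_{i=1}^t d(X_i, Y_i(t)) \;=\; \sum_{j=1}^{i(t)} D(\bs{V}^{T_j}, S_{j-1}, S_j) \;+\; \sum_{i=S_{i(t)}+1}^t V_i.
\]
Dividing by $t$, the first piece is $(i(t)/t)\cdot (1/i(t))\sum_{j=1}^{i(t)} D(\bs{V}^{T_j}, S_{j-1}, S_j)$, whose limsup almost surely equals $(1/\mu)\limsup_i (1/i)\sum_j D(\bs{V}^{T_j}, S_{j-1}, S_j)$ because the renewal LLN provides $i(t)/t \to 1/\mu$ a.s. The boundary term is bounded by $v_{\max}(t - S_{i(t)})/t \leq v_{\max}\bigl(Z_{i(t)+1} + (T_{i(t)} - S_{i(t)})\bigr)/t$. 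A Chebyshev--Borel--Cantelli argument using $E[Z_1^2] < \infty$ gives $Z_n/n \to 0$ a.s., and the same argument applied to the hypothesis $\sup_j E[(T_j-S_j)^2] < \infty$ gives $(T_j - S_j)/j \to 0$ a.s.; combined with $i(t)/t \to 1/\mu$, both pieces divided by $t$ vanish a.s., so the boundary term is negligible and \eqref{eq:simple_D} follows after taking expectations.

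The principal technical obstacle is converting the $L^2$-type hypothesis $\sup_j E[(T_j - S_j)^2] < \infty$ into the almost-sure statements needed to (i) invoke the martingale SLLN in the age argument and (ii) kill the boundary term $v_{\max}(t - S_{i(t)})/t$ in the distortion argument; once these a.s.~facts are in hand, everything else reduces to elementary renewal bookkeeping using the LLN for $i(t)/t$ together with the independence between the speaking and data processes.
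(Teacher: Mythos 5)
Your proof is correct and follows essentially the same route as the paper: the same martingale decomposition of $(T_j-S_j)Z_{j+1}$ plus the $L^2$-bounded martingale SLLN and Kronecker's lemma for the age identity, and the same speaking-interval regrouping of $\sum_{i\le t} d(X_i,Y_i(t))$ plus the renewal limit $i(t)/t \to 1/\mu$ for the distortion identity. Your Chebyshev--Borel--Cantelli step showing $(T_n-S_n)/n \to 0$ and $Z_n/n \to 0$ almost surely actually makes the boundary-term argument slightly more explicit than the paper's appendix, which only notes that $W_{i}$ and $Z_{i+1}$ are a.s.~finite before concluding $\tfrac{1}{t}(Z_{i+1}+W_i)v_{\max}\to 0$; your version supplies the missing rate.
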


\begin{proof}See Appendix \ref{app:thm1,2}.
\end{proof}

	At this point, we would like to eliminate some of the non-optimal selection processes. More specifically, we show that if a packet with importance value $\vmin$ and with timestamp less than $T_i$ is selected at time $T_i$, then one can find another selection process which performs at least as well.
	
	\begin{lemma}\label{lem:min_importance}
		Consider a selection process $\bs{S}$ with $S_{i_0} < T_{i_0}$ and $V_{S_{i_0}} = \vmin$ for some $i_0$. Then one can find another process $\tilde {\bs{S}}$ with $V_{\tilde S_{i_0}} > \vmin$ and such that $\Delta_e^{(\tilde \bS)} \leq \Delta_e^{(\bS)}$ and $D^{(\tilde \bS)} \leq D^{(\bS)}$.
	\end{lemma}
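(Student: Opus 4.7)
The plan is to construct $\tilde{\bs{S}}$ by altering $\bs{S}$ at the single index $i_0$, leaving $\tilde S_j = S_j$ for every $j \neq i_0$. Since $S_{i_0} < T_{i_0}$ and $S_{i_0} < S_{i_0+1}$, the interval $(S_{i_0}, (S_{i_0+1}-1) \wedge T_{i_0}]$ is non-empty except in the boundary case $S_{i_0+1} = S_{i_0}+1$, and I would pick $\tilde S_{i_0}$ to be an element of this interval with $V_{\tilde S_{i_0}} > \vmin$; the existence of such an element is exactly what makes the conclusion of the lemma non-trivial. The required monotonicity $\tilde S_{i_0-1} < \tilde S_{i_0} < \tilde S_{i_0+1}$ and the speaking-time constraint $\tilde S_{i_0} \leq T_{i_0}$ are then automatic, so $\tilde{\bs{S}}$ is a legitimate selection process.

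For the cost comparison I would work directly from the definitions, which avoids having to check $\bs{S}\in\cS_2$ in order to invoke Theorem~\ref{thm:simple_age_D}. The instantaneous age $\Delta_\tau^{(\tilde{\bs{S}})}$ differs from $\Delta_\tau^{(\bs{S})}$ only on $\tau \in (T_{i_0}, T_{i_0+1}]$, where the difference equals $S_{i_0} - \tilde S_{i_0} \leq 0$; summing, the total correction is $(S_{i_0} - \tilde S_{i_0}) Z_{i_0+1}$, an almost surely finite random variable independent of $t$. Dividing by $t$ and letting $t\to\infty$ makes the correction vanish, so $\Delta^{(\tilde{\bs{S}})} = \Delta^{(\bs{S})}$ and hence $\Delta_e^{(\tilde{\bs{S}})} = \Delta_e^{(\bs{S})}$. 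For the distortion, for every $t > T_{i_0}$ the reconstructions $Y_i(t)$ under $\bs{S}$ and $\tilde{\bs{S}}$ differ only at $i = S_{i_0}$ (now unreconstructed) and $i = \tilde S_{i_0}$ (now reconstructed), so $t \cdot (D_t^{(\tilde{\bs{S}})} - D_t^{(\bs{S})}) = \vmin - V_{\tilde S_{i_0}} \leq 0$, a bounded deterministic deviation that is washed out by the same $1/t$ normalization. Hence $D^{(\tilde{\bs{S}})} = D^{(\bs{S})}$, and both inequalities of the lemma hold (in fact with equality in the long-run average).

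The one point I expect to require care is the boundary case $S_{i_0+1} = S_{i_0}+1$, where a purely one-coordinate swap is infeasible because the admissible window for $\tilde S_{i_0}$ collapses. I would handle it by shifting a finite consecutive block: advance $\tilde S_{i_0}$ to the first later time $k > S_{i_0}$ with $V_k > \vmin$ that still respects some $\tilde S_{i_0}\leq T_m$, and reshuffle $\tilde S_{i_0},\tilde S_{i_0+1},\ldots,\tilde S_m$ so that the increasing and speaking-time constraints are preserved. The multiset of selected timestamps still changes only by removing $S_{i_0}$ and inserting a single index of strictly larger importance, so the cost-change arguments of the previous paragraph apply verbatim, with the correction remaining a bounded random variable that is annihilated by the $1/t$ Cesàro average. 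Verifying the admissibility of the block shift in this corner case is the main bookkeeping step of the proof.
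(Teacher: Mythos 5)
Your one-coordinate swap is clean and correct when it is feasible, but you have under-estimated both how often it fails and how much more work the general case requires. Besides the window-collapse case $S_{i_0+1}=S_{i_0}+1$, the swap also needs a packet with $V>\vmin$ to occur inside $(S_{i_0},(S_{i_0+1}-1)\wedge T_{i_0}]$; there is no reason for that to hold, and when the first packet of higher importance lies at or beyond $S_{i_0+1}$ (but $\leq T_{i_0}$) you must shift a whole block. You treat this as bookkeeping in a corner case, but it is in fact the generic situation and the crux of the lemma, and your analysis of it has a gap: you assert that ``the multiset of selected timestamps still changes only by removing $S_{i_0}$ and inserting a single index,'' so that the earlier two-coordinate cost argument applies verbatim. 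It does not. In the paper's construction --- $\tilde S_{i_0} := \min\{s>S_{i_0}: V_s>\vmin\}\wedge T_{i_0}$ and $\tilde S_j := (\tilde S_{j-1}+1) \vee S_j$ for $j>i_0$ --- the new process drops every $S_j$ that lies in $[S_{i_0},\tilde S_{i_0}-1]$ and inserts a run of consecutive indices starting at $\tilde S_{i_0}$, so the reconstructions can differ on an entire block, not at two coordinates.

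The correct accounting for the cascade is therefore different from your second paragraph. For distortion one must argue: every dropped $S_j$ has importance $\vmin$ (by minimality of $\tilde S_{i_0}$ among indices of higher importance after $S_{i_0}$), there are at most $\tilde S_{i_0}-S_{i_0}\leq T_{i_0}-S_{i_0}$ of them (an a.s.\ finite random quantity), and this bounded distortion excess is annihilated by the Ces\`aro average. For the age one uses the coordinate-wise domination $\tilde S_j\geq S_j$ for every $j$, which gives $T_j-\tilde S_j\leq T_j-S_j$ termwise. That domination is the structural property making the cascade work, and your two-index bookkeeping does not supply it; to close the gap you need to establish and exploit it explicitly.
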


\begin{proof}See Appendix \ref{app:min_importance}.
\end{proof}

Lemma \ref{lem:min_importance} helps restrict the search space for possibly optimal selection processes. The processes we study in the sequel will not select a minimum-importance packet if it has not arrived exactly at the speaking time. Let us denote the class of such selection processes as $\cS_2'$. 

A search for an optimal strategy based on the expressions in Theorem \ref{thm:simple_age_D} seems to be a complex task. We aim to obtain a further lower bound for $\eta \Delta_e^{(\bS)} + D^{(\bS)}$ which turns out to be an entity that is amenable for analysis. This lower bound is given in

\begin{theorem}\label{thm:reverse_fatou} Define 
	\begin{equation}\label{eq:onestep_cost}
	J_i(\eta)^{(S_1^{T_i})} :=  E\bigg[\sum_{j=1}^i \frac 1 \mu D(\bs{V}^{T_j},S_{j-1},S_{j}) + \eta(T_j-S_j) \bigg]
	\end{equation}
	\centerline{and}
	\begin{equation}
	J(\eta)^{(\bS)} := \limsup_{i\to\infty} \frac 1 i J_i(\eta)^{(S_1^{T_i})}.
	\end{equation}
	Then, for any $\bs{S} \in \cS_2'$, $ J(\eta)^{(\bs{S})} \leq D^{(\bs{S})} + \eta \Delta_e^{(\bs{S})}$.
\end{theorem}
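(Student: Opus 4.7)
The plan is to reduce the inequality to a reverse Fatou--type swap of $\limsup$ and expectation. Introduce the shorthand
\[
X_j := \tfrac{1}{\mu} D(\bs{V}^{T_j},S_{j-1},S_{j}) + \eta(T_j-S_j),\qquad \bar X_i := \tfrac{1}{i}\sum_{j=1}^i X_j,
\]
so that $J(\eta)^{(\bS)} = \limsup_i E[\bar X_i]$ by definition. Using Theorem \ref{thm:simple_age_D} to express $D^{(\bS)}$ and $\Delta_e^{(\bS)}$ as expected limsups of running averages, and then applying the pointwise subadditivity $\limsup_i(a_i+b_i)\leq \limsup_i a_i + \limsup_i b_i$ inside the expectation, I obtain
\[
D^{(\bS)} + \eta\,\Delta_e^{(\bS)} \;\geq\; E\!\left[\limsup_i \bar X_i\right].
\]
Hence the claim follows once I establish the reverse Fatou--type bound $\limsup_i E[\bar X_i] \leq E[\limsup_i \bar X_i]$.

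To obtain this, I would verify that $\{\bar X_i\}$ is uniformly integrable and then apply the standard truncation argument: writing $\bar X_i = (\bar X_i\wedge K) + (\bar X_i - K)_+$, the bounded version of reverse Fatou handles the truncated piece, using $\limsup_i (\bar X_i\wedge K) = (\limsup_i \bar X_i)\wedge K$, while uniform integrability drives $\sup_i E[(\bar X_i - K)_+] \to 0$ as $K \to \infty$.

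I would establish uniform integrability via the stronger bound $\sup_i E[\bar X_i^2]<\infty$. Split $\bar X_i = A_i + \eta B_i$ with $A_i := \tfrac{1}{\mu i}\sum_{j=1}^i D(\bs{V}^{T_j},S_{j-1},S_{j})$ and $B_i := \tfrac{1}{i}\sum_{j=1}^i (T_j - S_j)$. The summed distortions amount to a sum of $V_{j'}$ over the indices $j'$ skipped by $\bS$ up to $S_i - 1 \leq T_i$, so $A_i \leq (v_{\max}/\mu)\,T_i/i$; Jensen's inequality gives $(T_i/i)^2 \leq \tfrac{1}{i}\sum_{k=1}^i Z_k^2$, hence $E[(T_i/i)^2] \leq E[Z_1^2] < \infty$ uniformly in $i$. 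For $B_i$, Cauchy--Schwarz yields $B_i^2 \leq \tfrac{1}{i}\sum_{j=1}^i (T_j-S_j)^2$, so $E[B_i^2] \leq \sup_j E[(T_j-S_j)^2] < \infty$ by $\bS \in \cS_2$. Combining via $(a+b)^2\leq 2a^2+2b^2$ gives $\sup_i E[\bar X_i^2] < \infty$, which is more than enough for uniform integrability.

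The main obstacle is precisely this uniform integrability step: the two square-integrability hypotheses, $E[Z_1^2]<\infty$ for the renewal process and $\sup_i E[(T_i-S_i)^2] < \infty$ for the selection process, are exactly what prevent the distortion and age Cesaro averages from developing heavy tails. Without them one can easily construct sequences where $\limsup$ and expectation fail to commute in the required direction, so the Fatou-style swap would be unjustified and the whole strategy would collapse.
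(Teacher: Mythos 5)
Your proof is correct and follows essentially the same strategy as the paper's: establish uniform integrability of the Cesàro averages via square-integrability (from $E[Z_1^2]<\infty$ and $\sup_i E[(T_i-S_i)^2]<\infty$), invoke the reverse Fatou lemma for uniformly integrable families, and combine the age and distortion contributions with subadditivity of $\limsup$. The only cosmetic difference is the order of operations: you merge the two terms into a single Cesàro average $\bar X_i$, apply the pointwise inequality $\limsup(a_i+b_i)\leq\limsup a_i+\limsup b_i$ inside the expectation, and then apply reverse Fatou once, whereas the paper applies reverse Fatou separately to the age and distortion pieces and then uses $\limsup$-subadditivity on the resulting deterministic sequences; the two orderings are interchangeable and lead to the same bound.
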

\begin{proof}
	See Appendix $\ref{app:thm1,2}$.
\end{proof}
A straightforward implication is
		\begin{equation}\label{eq:infimum}
J^*(\eta) := \inf_{\bS \in \cS_2'}  J(\eta)^{(\bs{S})} \leq \inf_{\bS\in \cS_2'}  \big(D^{(\bs{S})} + \eta \Delta_e^{(\bs{S})}\big).
\end{equation}
As we shall see later in Section \ref{sec:exact_buffer}, it turns out that the inequality \eqref{eq:infimum} is indeed an equality. The key reason to introduce $J(\eta)^{(\bs{S})}$ is that its infimization can be formulated as a MDP, which we do next. This requires identifying the states and the actions of the MDP and verifying that (i) the distribution of the next state and (ii) the one-step cost depend only on the current state-action pair. We claim that the buffer content at the $i^{\text{th}}$ speaking time
\begin{equation}
\bs{B}_i := \bs{V}_{S_{i-1}+1}^{T_i},
\end{equation}
and the number of packets to be dropped $A_i := S_i - S_{i-1}$ constitutes this state-action pair.
To see this, note that given $\bs{B}_i$ and $A_i$, (i) the next buffer content is independent of the past; and (ii) the one-step cost 
\begin{equation}
\frac 1 \mu D(\bs{V}^{T_i},S_{i-1},S_{i}) + \eta(T_i-S_i)
\end{equation}
is a function only of $\bs{B}_i$ and $A_i$. This is because the above is equal to
\begin{equation}\frac 1 \mu \sum_{k=1}^{s-1}b_k + \eta(l(\bs{b})-s); \end{equation}
with $\bs{b} = \bs{B}_i$ and $s = A_i$. Since (i) and (ii) are satisfied, the problem is indeed an MDP whose state at instant $i$ is the buffer content $\bs{B}_{i}$; and whose action at instant $i$ is $A_i$. The assumption $S_0 = 0$ allows us to choose $\bs{B}_0$ as an empty buffer.


 The formulation above is an infinite-horizon average-cost MDP \cite{Bertsekas} with states $\bs{b} \in \cV^* := \cup_{k=1}^\infty\cV^{k}$; and the set of possible actions for a state $\bs{b}$ is given by $s \in \{1,\ldots, l(\bs{b})\}$, where $l(\bs{b})$ is the length of the buffer $\bs{b}$. Consequently, the sender chooses the packet with timestamp $S_i = T_i + s - l(\bs{b})$ at time $T_i$. Observe that setting $s= l(\bs{b})$ corresponds to the selection of the freshest packet whereas setting $s= 1$ corresponds to the selection of the oldest packet in the buffer.  Furthermore, since we are interested in the selection procedures in $\cS_2'$, $s \neq l(\bs{b})$ only if $b_{s} > \vmin$. 
 
 In general, the optimal policies of an MDP need not be stationary (the current action is a deterministic function of the current state). We will show in Section \ref{sec:exact_buffer} that for our problem, the optimal policy is indeed stationary. When we consider a stationary selection process $\bs{S}$, we explicitly write the argument $(\bs{b})$. Let us recall

\begin{definition}[Unichain policy, \cite{Bertsekas}] If a stationary policy $s(\bs{b})$ induces a Markov chain with a single recurrent class and a possibly empty set of transient states, it is called unichain.
\end{definition}

In an average-cost dynamic programming setting, we evaluate a unichain policy $s(\bs{b})$, $\bs{b} \in \cV^*$ by solving the linear system with unknowns $h(\bs{b})$, $\bs{b} \in \cV^*$ and $\lambda$; given by 
\begin{equation}\label{eq:linear_solve}
h(\bs{b}) + \lambda = \frac 1 \mu \sum_{k=1}^{s(\bs{b})-1}b_k + \eta(l(\bs{b})-s(\bs{b})) + E[h(\bs{b}_{\geq s(\bs{b})+1} \| \bs{V}^{Z})],
\end{equation}
where $Z$ is the next interspeaking time, $\bs{V}^{Z}$ is a vector of i.i.d. $V$'s of length $Z$, $h(\bs{b})$ is called the relative value of state $\bs{b}$, and $\lambda$ is the average cost induced by this policy. Since \eqref{eq:linear_solve} determines $h(\bs{b})$ up to an additive constant, we take a reference state --- see \cite[Chapter 4]{Bertsekas} --- as one of the $\bs{b} \in \cV^*$ and we set $h(\bs{b}) = 0$. We also note that for a unichain policy, the linear system given by \eqref{eq:linear_solve} has a unique solution \cite{Bertsekas}.

\begin{remark}\label{rem:no_data_sent}
	To cover the case where $v=0$ is interpreted as the source having not generated any data and $\Delta_t$ should not decrease upon the sending of $v = 0$; one can proceed as follows: The set of possible actions for a state $\bs{b}$ is extended to $\{0, 1,\ldots, l(\bs{b})\}$, and $s> 0$ only if $b_{s} > 0$. That is, the sender is allowed to choose only the packets with $v  > 0$. Also note that for the all-zero buffer, the only possible action is to set $s= 0$, i.e., nothing has arrived since the last selection and hence there is nothing to send. Observe that in this case $\Delta_t$ does not drop. 
\end{remark}

Although we have characterized a lower bound based on a MDP formulation, the formulated problem has a countably infinite state space, $\cV^*$. It is known that for this class of problems, analysis of optimal policies become formidably complex in general. Moreover, it is not certain that a stationary policy attains the infimum in \eqref{eq:infimum}. The complexity of this problem leads us to consider a finite-state modification of the problem; and the next section is devoted for this modified version.

%

%
	
	\subsection{Policy Iteration with a Truncated State Space}\label{sec:truncated}
	We now consider a finite-state version of the problem where the sender forgets the packets that have arrived more than $K$ time slots ago. That is, the current buffer content at time $T_i$ becomes
	\begin{equation}
	\bs{B}_i := \bs{V}_{(S_{i-1}+1)\vee (T_i-K+1)}^{T_i}.
	\end{equation}
	Consequently, the buffer length is limited to at most $K$, and the state space becomes finite. Denote this finite state space by $\cV^{\leq K} := \cup_{l\leq K} \cV^l$. One may notice that by restricting the state space, we might not attain the infimal value $J^*(\eta)$. However, as we shall see later in \ref{sec:exact_buffer}, the optimal policy of the original \emph{infinite} state space problem will base its decisions only on a bounded buffer. Thus, we do not lose optimality provided that $K$ is large enough.
	
	A slight modification of \eqref{eq:linear_solve} is enough to obtain the linear system whose solution yields the relative values and the average cost. First, observe that if the buffer state $\bs{b}$ has length more than $K$, i.e., if $l(\bs{b}) > K$, then $h(\bs{b})$ can be replaced with
	\begin{equation}
	\sum_{k = 1}^{l({\bs{b}})-K} b_k + h(\bs{b}_{\geq l(\bs{b})-K+1}).
	\end{equation}
	as the first $l(\bs{b}) -K$ terms will be forgotten in the truncated problem.
	Let $p_i := \Pr(Z = i)$ and $q_i := \Pr(Z \geq i)$. Then the linear system of equations to evaluate a unichain stationary policy $s(\bs{b})$, $\bs{b} \in \cV^{\leq K}$ becomes
	\begin{align}\label{eqn:linear_solve_K}
	h(\bs{b}) + \lambda &= \frac 1 \mu \sum_{k=1}^{s(\bs{b})-1}b_k + \eta(l(\bs{b})-s(\bs{b}))\mbox{}+ \frac 1 \mu \sum_{k= s(\bs{b})+1}^l b_kq_{K+k-l} + \frac {E[V]} \mu E[(Z-K)^+]\notag\\
	&\phantom{=}+ \sum_{k = 1}^{K-l+s(\bs{b})}p_k E[h(\bs{b}_{\geq s(\bs{b})+1} \| \bs{V}^k)] + \sum_{k = K-l+s(\bs{b})+1}^{K-1}p_k E[h(\bs{b}_{\geq k-(K-l)+1} \|\bs{V}^k)]+ q_{K}E[h(\bs{V}^K)]\\
	&=: C_{\bs{h}}(\bs{b},s(\bs{b})),\label{eq:defn_c}
	\end{align}
	with $h(v_{\min}) = 0$. This choice of the reference state also sets $h(b) = 0$ for $b \in \cV$. Namely, for buffers that contain only one packet, the relative value will be equal to zero.
	
	At this point, we have obtained the policy evaluation method for our finite-state problem. A first attempt could be to find the optimal policies numerically. We consider the well-known policy iteration algorithm \cite{Bertsekas}. A brief description is given in Algorithm \ref{alg:policy_iter}.
	\begin{algorithm}\label{alg:policy_iter}\caption{Policy iteration}
		Start with the stationary policy $s^{(0)}(\bs{b}) = l(\bs{b})$.\\
		Evaluate $s^{(i)}(\bs{b})$ according to \eqref{eqn:linear_solve_K} to find $h^{(i)}(\bs{b})$, $\bs{b} \in \cV^{\leq K}$ and $\lambda^{(i)}$.\\
		$\text{For all }\bs{b} \in \cV^{\leq K} $, set $$ s^{(i+1)}(\bs{b}) = \arg\min_{\substack{s:s \leq l(\bs{b})\\ b_{s} \neq \vmin \text{ if } s < l(\bs{b})}} C_{\bs{h}^{(i)}}(\bs{b},s)$$\\
		If $s^{(i+1)}(\bs{b}) = s^{(i)}(\bs{b})$ for all $\bs{b} \in \cV^{\leq K}$, terminate. Else go to step 2.
	\end{algorithm}
	
	Although the number of states is now finite, it is not clear that the policy iteration algorithm yields a unichain policy. We shall show this in the following lemma.
	
	\begin{lemma}\label{lem:unichain} If the buffer size is limited to $K$, the policy iteration terminates with an optimal unichain policy in $\cS_2'$.
	\end{lemma}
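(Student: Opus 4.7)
The strategy is to verify (i) every stationary policy in $\cS_2'$ induces a unichain Markov chain on the finite state space $\cV^{\leq K}$, after which (ii) Howard's policy iteration for finite-state average-cost MDPs \cite{Bertsekas} terminates in finitely many steps at an optimal policy. Since step~3 of Algorithm~\ref{alg:policy_iter} enforces the defining constraint of $\cS_2'$, and the truncation automatically bounds $T_i - S_i \leq K-1$ so that square-integrability is free, the output of the algorithm lies in $\cS_2'$ and is optimal in that class. The substantive work is therefore (i).

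To prove (i), I would consider the ``reset'' set $\cR := \{(\vmin)^l : l \in \mathrm{supp}(\min(Z_1, K))\}$ of buffers filled entirely with minimum-importance packets. The key observation is that at any $\bs{b} = (\vmin)^l$, the $\cS_2'$ constraint forces $s(\bs{b}) = l$ (since $b_s = \vmin$ for every $s < l$); hence the remainder after the action is empty and the next state is $\bs{V}^{\min(Z, K)}$, a vector of fresh i.i.d.\ importance values of random length. Moreover, $\cR$ is accessible from every state: starting at any $\bs{b}_0$ at time $T_0$, the buffer at $T_K$ contains only arrivals with timestamps strictly greater than $T_0$, because $T_K - T_0 = \sum_{j=1}^K Z_j \geq K$ and the buffer window has length at most $K$. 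Conditional on $(Z_1, \ldots, Z_K)$ and the policy's actions, the content of $\bs{B}_K$ is a vector of i.i.d.\ $V$'s, so with positive probability all its entries equal $\vmin$, giving $\bs{B}_K = (\vmin)^{l_K}$ for some random $l_K \in \{1,\ldots,K\}$. One further transition as described above lands in $\cR$ with positive probability.

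The unichain conclusion then follows from a standard finite Markov chain argument: if the induced chain had two disjoint recurrent (i.e., closed communicating) classes $\cC_1$ and $\cC_2$, then since $\cR$ is reachable from every state, $\cR$ would have to lie inside each $\cC_i$, contradicting disjointness. Hence there is a unique recurrent class, so the chain is unichain. Classical policy iteration theory for finite-state average-cost MDPs \cite{Bertsekas} then gives finite termination at an optimal stationary policy, which necessarily lies in $\cS_2'$ by the enforced constraint.

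The main obstacle is the reachability step when $Z$ has bounded support: in that case $\mathrm{supp}(\min(Z,K))$ is a strict subset of $\{1,\ldots,K\}$ and one cannot generally return to a single fixed state like $(\vmin)$ in a single step. Treating $\cR$ as a set of possible reset states and using the two-step ``purge, then reset'' argument --- first waiting $K$ speaking times to flush all packets from before $T_0$, then using one more transition through a pure-$\vmin$ buffer --- sidesteps this difficulty without requiring any additional hypothesis on the distribution of $Z$ beyond what is already assumed.
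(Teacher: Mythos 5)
Your proof is correct and takes essentially the same approach as the paper: both rest on the observation that from any state the buffer eventually becomes all-$\vmin$, at which point the $\cS_2'$ constraint forces a complete reset, so every stationary truncated policy induces a single recurrent class and finite-state average-cost policy iteration applies. Your ``purge for $K$ speaking times, then hit an all-$\vmin$ window'' argument usefully fills in the reachability step the paper leaves terse; the one slight imprecision is the assertion that $\cR$ ``would have to lie inside each $\cC_i$'' --- reachability only gives $\cR\cap\cC_i\neq\emptyset$, and one should add that every $(\vmin)^l\in\cR$ has the identical post-reset transition kernel $\bs{V}^{\min(Z,K)}$, so the elements of $\cR$ all communicate with one another, which is what actually forces $\cC_1=\cC_2$.
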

	\begin{proof}
		Consider all truncated policies, e.g., non-stationary, history dependent but can only choose the most recent $K$ packets in the buffer. Since $\{V_i\}_{i \geq 0}$ is an i.i.d. process, any policy eventually reaches a state consisting of only $\vmin$'s and must choose the most recent packet. Hence, the buffer must be eventually renewed for any policy in $\cS_2'$ and as a result, there must be a single recurrent class. Therefore, there exists an optimal stationary and deterministic strategy that is unichain and this policy can be found with the policy iteration algorithm  \cite{Bertsekas}.
	\end{proof}

\begin{remark}\label{remark:perturb} Intuitively, step 3 of the above algorithm modifies $s^{(i)}(\bs{b})$ in the following way: Consider two processes starting at the state $\bs{b}$. The first one is iterated with respect to $s^{(i)}$, whereas the second one is iterated with a different action $\tilde s(\bs{b})$ at the first step and with $s^{(i)}$ subsequently. Now consider the expected accumulated costs of these two processes until they reach the same state. If the second process has a smaller expected accumulated cost, changing all $s^{(i)}(\bs{b})$ to $\tilde s(\bs{b}) = s^{(i+1)}(\bs{b})$ results in a better policy; otherwise try another $\tilde s(\bs{b})$.
\end{remark}
	
	\subsection{The Exact Buffer Size for an Optimal Policy}\label{sec:exact_buffer}
	
	Truncating the state space restricts the actions that may be taken. Therefore, in general, the infimum in \eqref{eq:infimum} may not be attained with a truncated buffer. As we have said in the previous section, it turns out that this is not the case for our problem and the infimum is indeed attained with a finite buffer size. In this section we quantify this buffer size.
	
	First, consider the policy $s(\bs{b}) = l(\bs{b})$ for all $\bs{b} \in \cV^{\leq K }$, i.e., always send the most recent packet in the buffer. One can observe that this policy induces a Markov chain with only $|\cV| = 2$ states regardless of $K$. We shall now show that this policy is optimal for $\eta$ above some threshold $\eta_{\max}$.
	
	\begin{lemma}\label{lemma:eta_critical} For $\eta \geq \eta_{\max} := \frac 1 \mu(v_{\max} - v_{\min})$ and for any $M \geq 1$, the optimal policy among $\cV^{\leq M}$ is $s(\bs{b}) = l(\bs{b})$; which can be implemented with a buffer size of 1.
	\end{lemma}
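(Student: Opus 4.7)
The plan is to invoke the Bellman optimality principle for the truncated average-cost MDP on $\cV^{\leq M}$. Concretely, I will exhibit a candidate pair $(h^\star,\lambda^\star)$, verify that it solves the policy-evaluation equation \eqref{eqn:linear_solve_K} under the stationary rule $s^\star(\bs{b})=l(\bs{b})$, and then show that this same pair also satisfies the Bellman optimality inequality $h^\star(\bs{b})+\lambda^\star\leq C_{h^\star}(\bs{b},s)$ for every state $\bs{b}\in\cV^{\leq M}$ and every competing action $s<l(\bs{b})$, provided $\eta\geq\eta_{\max}$. Since $s^\star$ lies in $\cS_2'$ and is therefore unichain by Lemma \ref{lem:unichain}, standard average-cost dynamic programming \cite{Bertsekas} identifies $s^\star$ as optimal, and the ``buffer of size $1$'' implementation is immediate because $s^\star$ only ever reads the most recent slot of $\bs{b}$.

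The natural ansatz is
\[
h^\star(\bs{b})\;:=\;\frac{1}{\mu}\sum_{k=1}^{l(\bs{b})-1} b_k,\qquad \lambda^\star\;:=\;\frac{(\mu-1)E[V]}{\mu},
\]
motivated by the fact that under $s^\star$ the immediate $\bs{b}$-dependent cost is exactly the mass of everything but the freshest slot, while the next buffer is distributionally $\bs{V}^{\min(Z,M)}$ and independent of $\bs{b}$. Substituting into \eqref{eqn:linear_solve_K} and using $h^\star(\bs{V}^k)=\frac{(k-1)E[V]}{\mu}$ together with $E[\min(Z,M)]+E[(Z-M)^+]=\mu$, the $\bs{b}$-independent block telescopes exactly to $\lambda^\star$, so the policy-evaluation equation is satisfied; the reference convention $h^\star(v_{\min})=0$ (and more generally $h^\star(b)=0$ for every single-element buffer) is automatic.

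For the optimality test, I compute $C_{h^\star}(\bs{b},s)-C_{h^\star}(\bs{b},l(\bs{b}))$ for a deviation $s<l(\bs{b})$ by tracking the slots $b_{s+1},\ldots,b_{l(\bs{b})}$ that the deviation would carry forward. Each such slot is either skipped at a future speaking time (accounted for inside $h^\star$ of the next state) or pruned by the $M$-truncation (accounted for through the weight $q_{M+k-l}$ in the first line of \eqref{eqn:linear_solve_K}), and in either case it contributes exactly $b_k/\mu$ to the total cost. After this bookkeeping the difference collapses to the two-packet gap
\[
C_{h^\star}(\bs{b},s)-C_{h^\star}(\bs{b},l(\bs{b}))\;=\;\eta\bigl(l(\bs{b})-s\bigr)\;+\;\frac{b_{l(\bs{b})}-b_s}{\mu},
\]
which is non-negative because $b_s-b_{l(\bs{b})}\leq v_{\max}-v_{\min}$, $l(\bs{b})-s\geq 1$, and $\eta\geq\eta_{\max}=(v_{\max}-v_{\min})/\mu$.

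The main obstacle I anticipate is the algebraic bookkeeping in \eqref{eqn:linear_solve_K}, whose three separate future-value sums (full fit, partial left-truncation, full truncation) must be aligned between the two compared actions without double-counting the boundary case $Z=M$. The conceptual content, however, is transparent: since both strategies eventually pay the per-packet price $b_k/\mu$ for any slot not immediately sent, the only effective consequence of a deviation is to trade an age penalty of $\eta$ per deferred slot against a distortion gain of at most $(v_{\max}-v_{\min})/\mu$ per slot, and $\eta_{\max}$ is precisely the break-even point.
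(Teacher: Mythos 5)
Your argument is correct, and it takes a genuinely different route from the paper's.  The paper proves Lemma~\ref{lemma:eta_critical} by showing that the policy $s(\bs{b})=l(\bs{b})$ is a fixed point of policy iteration: it seeds the algorithm with this policy and, via the coupling interpretation of Remark~\ref{remark:perturb}, shows that any one-step perturbation $\tilde s(\bs{b})=l(\bs{b})-k$ increases the accumulated cost by $k\eta-\frac 1 \mu(b_{l(\bs{b})-k}-b_{l(\bs{b})})\geq 0$, after which the two processes immediately coincide.  You instead exhibit the explicit solution $h^\star(\bs{b})=\frac 1\mu\sum_{k=1}^{l(\bs{b})-1}b_k$, $\lambda^\star=\frac{(\mu-1)E[V]}{\mu}$, verify it satisfies the policy-evaluation equation, and check the Bellman optimality inequality directly.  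Both approaches bottom out on the same arithmetic identity $C_{h^\star}(\bs{b},s)-C_{h^\star}(\bs{b},l(\bs{b}))=\eta(l(\bs{b})-s)+\frac{b_{l(\bs{b})}-b_s}{\mu}\geq 0$, so neither is ``deeper''; but your closed-form $h^\star$ has the virtue of making the average cost $\lambda^\star$ explicit (and reveals that the truncation bookkeeping washes out because $h^\star$ is additive along suffixes, which is the observation underlying Property~\ref{prop:opt_soln}), whereas the paper's coupling argument is shorter and foreshadows the technique reused in Lemma~\ref{lem:extreme} and Theorem~\ref{thm:suff_buffer_size}.  One small caveat: you assert without proof that the ``$\bs{b}$-independent block telescopes'' in \eqref{eqn:linear_solve_K}; this is true and follows from $\sum_{k\geq 1}p_k(\min(k,M)-1)+E[(Z-M)^+]=\mu-1$, but since this computation is the entire content of the policy-evaluation step, it is worth writing out rather than waving at.
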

	\begin{proof} We show that the policy $s(\bs{b}) = l(\bs{b})$ remains unchanged under policy iteration. Start the policy iteration with $s^{(0)}(\bs{b}) = l(\bs{b})$. Recalling Remark \ref{remark:perturb}, we will show that perturbing the policy at initial step cannot decrease the expected accumulated cost until the original and perturbed processes coincide. Assume the perturbed action is $\tilde s(\bs{b}) = l(\bs{b}) - k$ for a $k > 0$. Notice that the two processes will coincide immediately at the next step and the difference of the accumulated costs will be $k\eta - \frac 1 \mu (b_{l(\bs{b})-k}-b_{l(\bs{b})})\geq \eta - \frac 1 \mu (v_{\max}-v_{\min}) \geq 0$. Hence the policy remains unchanged.
	\end{proof}
	
	%
	
Considering the truncated state space $\cV^{\leq K}$, we give some properties of optimal policies.
	\begin{property}\label{prop:opt_soln} 
		For an optimal stationary policy $s^*(\bs{b})$, and optimal relative values $h^*(\bs{b})$, the following hold:
		\begin{itemize}
			\item[(i)] For any state $\bs{b} \| \bs{b}'$, either $s^*(\bs{b} \| \bs{b}') = l(\bs{b}) + s^*(\bs{b}')$ or $s^*(\bs{b}\|\bs{b}') \leq l(\bs{b})$.
			\item[(ii)] $h^*(\bs{b}') \leq h^*(\bs{b}\|\bs{b}') \leq \frac 1 \mu(b_1+\ldots + b_{l(\bs{b})}) + h^*(\bs{b}')$ for $\bs{b}$, $\bs{b}' \in \cV^{\leq K}$.
		\end{itemize}
	\end{property}
\begin{proof}
	See Appendix \ref{pf:prop1}.
\end{proof}

To get a sense of how far can the optimal policy go back in time, i.e., to measure how large $T_i-S_i$ can be, it may be informative to consider the following extreme case, which yields a lower bound on the maximal possible value of $T_i-S_i$.

	\begin{lemma}\label{lem:extreme} For the state $\bs{b} = [v_{\max},\underbrace{v_{\min},\ldots,\vmin}_{L-1}]$,
		\begin{equation}
		s^*(\bs{b})  = \begin{cases}1,\quad &\eta \leq \frac 1 \mu\frac{(v_{\max}-v_{\min})}{L-1}\\
		L,\quad & \eta  >  \frac 1 \mu\frac{(v_{\max}-v_{\min})}{L-1}
		\end{cases}.
		\end{equation}
	\end{lemma}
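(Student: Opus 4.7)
The plan is to reduce the candidate actions at $\bs{b}$ to $\{1, L\}$ and then compare them via the Bellman optimality equation, after first establishing a key identity for the relative value of states with a $\vmin$-only prefix. First, the $\cS_2'$ feasibility constraint requires that if $s < l(\bs{b}) = L$ then $b_s > \vmin$. Since $b_k = \vmin$ for every $k \in \{2,\ldots,L-1\}$, the only admissible actions are $s=1$ (selecting the oldest, $v_{\max}$) and $s = L$ (selecting the freshest, which is $\vmin$ and allowed as $s = l(\bs{b})$). It therefore suffices to decide which of these minimizes $C_{\bs{h}^*}(\bs{b},s)$.

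The technical core is the identity
\begin{equation}
E\bigl[h^*([\,\underbrace{\vmin,\ldots,\vmin}_{L-1}\,] \| \bs{V}^Z)\bigr] \;=\; E[h^*(\bs{V}^Z)] + \frac{(L-1)\vmin}{\mu},
\end{equation}
where we assume $K$ is large enough (e.g., exceeding the support of $L-1+Z$) that truncation does not interfere; the general truncated case follows by bookkeeping the discarded-prefix cost. The upper bound is immediate from Property \ref{prop:opt_soln}(ii) with $\bs{b} = [\vmin,\ldots,\vmin]$ of length $L-1$. For the matching lower bound, I invoke Property \ref{prop:opt_soln}(i): one of the two alternatives it offers, namely $s^*([\vmin,\ldots,\vmin]\|\bs{b}') \leq L-1$, would single out a $\vmin$-valued packet with $s < l$, contradicting $\cS_2'$. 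Hence $s^*([\vmin,\ldots,\vmin]\|\bs{b}') = (L-1) + s^*(\bs{b}')$; the one-step cost is exactly $(L-1)\vmin/\mu$ larger than from $\bs{b}'$ (the leading $\vmin$'s are all skipped), and the next state is identical. Plugging into the Bellman equation \eqref{eq:defn_c} at both states, subtracting, and taking expectation over $\bs{V}^Z$ yields the identity.

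With the identity in hand, applying the Bellman equation at $\bs{b}$ for the two admissible actions gives
\begin{align*}
C_{\bs{h}^*}(\bs{b},1) &= \eta(L-1) + E\bigl[h^*([\vmin,\ldots,\vmin] \| \bs{V}^Z)\bigr], \\
C_{\bs{h}^*}(\bs{b},L) &= \frac{v_{\max} + (L-2)\vmin}{\mu} + E[h^*(\bs{V}^Z)].
\end{align*}
Substituting the identity, the terms $E[h^*(\bs{V}^Z)]$ cancel and
\begin{equation*}
C_{\bs{h}^*}(\bs{b},1) - C_{\bs{h}^*}(\bs{b},L) \;=\; \eta(L-1) - \frac{v_{\max} - \vmin}{\mu},
\end{equation*}
which is nonpositive precisely when $\eta \leq (v_{\max}-\vmin)/[\mu(L-1)]$. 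This gives the stated threshold, and with the tie broken in favor of $s=1$.

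The main obstacle is Step 2: promoting the inequality in Property \ref{prop:opt_soln}(ii) to the equality above. Property \ref{prop:opt_soln}(i) together with the $\cS_2'$ restriction is the decisive ingredient, since it forces the optimal policy to ignore a prefix of $\vmin$-valued packets and makes their only contribution the accumulated skip-penalty $(L-1)\vmin/\mu$. A secondary bookkeeping nuisance is ensuring the truncation in \eqref{eqn:linear_solve_K} does not spoil the identity; this is handled either by assuming $K$ is sufficiently large or by absorbing the truncated $\vmin$-mass into the same constant.
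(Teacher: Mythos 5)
Your proof is correct and takes essentially the same route as the paper's: both reduce the choice at $\bs{b}$ to the two admissible actions and arrive at the same net cost difference $\eta(L-1) - \tfrac{1}{\mu}(v_{\max}-\vmin)$. Where the paper packages the key step as the perturbation/coupling heuristic of Remark~\ref{remark:perturb}, you make it explicit via the relative-value identity $h^*([\vmin,\ldots,\vmin]\|\bs{b}') = h^*(\bs{b}') + (L-1)\vmin/\mu$ obtained from Property~\ref{prop:opt_soln}(i), the $\cS_2'$ restriction, and the Bellman equation --- a somewhat more explicit but equivalent presentation of the same underlying calculation.
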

	\begin{proof} Since we work with policies in $\cS_2'$, the two possible actions for this state are either choosing the $v_{\max}$ at the beginning or choosing the $v_{\min}$ at the end. Referring to Remark \ref{remark:perturb}, suppose at iteration $i$ we have $s^{(i)}(\bs{b}) = l(\bs{b})$ and $\tilde s(\bs{b}) = 1$; and we aim to find the difference of accumulated costs until the original and the perturbed processes coincide. Observe that these processes coincide immediately after the first step and the difference will be $\eta(L-1) + \frac 1 \mu (v_{\min}-v_{\max})$. Then, $s^{(i+1)}(\bs{b}) = 1$ if $\eta(L-1) \leq \frac 1 \mu (v_{\max}-v_{\min})$; otherwise $s^{(i+1)}(\bs{b}) = l(\bs{b})$. Note that the difference does not depend on $i$ and hence the statement for $s^{(i+1)}(\bs{b})$ is also true for $s^*(\bs{b})$.
	\end{proof}
	The above lemma therefore gives a necessary buffer size for a possibly optimal policy as it tells that at $\eta = \frac 1 \mu\frac{(v_{\max}-v_{\min})}{L-1}$, the first packet in the buffer given in Lemma \ref{lem:extreme} is chosen by the optimal policy. Hence, attaining the optimal policy requires a buffer size of at least $\lceil\frac 1 \mu\frac{(v_{\max}-v_{\min})}{\eta}\rceil$. Observe that this does not imply that the optimal policy is reached within this particular finite buffer size. Nevertheless, we can prove that this is indeed the case.
	
	\begin{theorem}\label{thm:suff_buffer_size} For $M \geq K(\eta) := \lceil\frac 1 \mu\frac{(v_{\max}-v_{\min})}{\eta}\rceil$, the optimal policy among $\cV^{\leq M}$ is attained by a policy with buffer size $K(\eta)$. Furthermore, if $b_{s^*(\bs{b})} = v_i$, then $l(\bs{b}) - s^*(\bs{b}) < K_i(\eta) := \lceil\frac 1 \mu\frac{(v_{i}-v_{\min})}{\eta}\rceil$ for all $1  \leq i \leq |\cV|$.
	\end{theorem}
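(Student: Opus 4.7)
The plan is to reduce the first statement to the second and then prove the second by a two-step exchange argument generalizing the one in Lemma~\ref{lem:extreme}. For the reduction: since $v_i \leq v_{\max}$ for every $i$, we have $K_i(\eta) \leq K(\eta)$, so the bound $l(\bs{b}) - s^*(\bs{b}) < K_i(\eta)$ already implies that the optimal policy never looks further than $K(\eta)$ slots back; truncating the buffer at any $M \geq K(\eta)$ therefore preserves optimality.

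For the second statement, I would fix an optimal stationary unichain policy $s^*$ on $\cV^{\leq M}$ with relative values $h^*$ solving \eqref{eqn:linear_solve_K}, a state $\bs{b}$ with $s := s^*(\bs{b}) < l(\bs{b})$ and $b_s = v_i$, and compare $s^*$ against the perturbation ``select $\tilde s = l(\bs{b})$ at step~1, then follow $s^*$ afterwards.'' Property~\ref{prop:opt_soln}(i) applied at the original process's post-step-1 state $\bs{b}_{\geq s+1} \| \bs{V}^Z$ supplies a coupling under which the two processes coincide after the second step: in case~(1) of that property the original process selects position $(l(\bs{b}) - s) + s^*(\bs{V}^Z)$ of the combined buffer---which is the same packet of $\bs{V}^Z$ that the perturbed process picks from its fresh state $\bs{V}^Z$---and both processes land in the identical tail state $\bs{V}^Z_{\geq s^*(\bs{V}^Z)+1} \| \bs{V}^{Z_2}$ after the second speaking time. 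Writing out the two-step cost difference (perturbed minus original) then telescopes to
\[
\frac{1}{\mu}\sum_{k=s}^{l(\bs{b})-1} b_k - \eta(l(\bs{b}) - s) - \frac{1}{\mu}\sum_{k=s+1}^{l(\bs{b})} b_k \;=\; \frac{1}{\mu}\bigl(v_i - b_{l(\bs{b})}\bigr) - \eta\bigl(l(\bs{b}) - s\bigr).
\]
Optimality of $s^*$ forces this difference to be nonnegative, so $\eta\mu(l(\bs{b}) - s^*(\bs{b})) \leq v_i - b_{l(\bs{b})} \leq v_i - v_{\min}$, which is equivalent to $l(\bs{b}) - s^*(\bs{b}) < K_i(\eta)$ for generic $\eta$ (with ties broken toward the newer packet at the boundary).

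The main obstacle is the case~(2) alternative of Property~\ref{prop:opt_soln}(i), in which the optimal action at $\bs{b}_{\geq s+1} \| \bs{V}^Z$ is to select from inside the leftover prefix rather than passing it over; then the two processes do not synchronize after just two steps and the clean two-step accounting breaks down. The resolution is either (a)~to iterate the same two-step exchange on the residual prefix and induct on its length until the processes synchronize, or (b)~to observe that case~(2) forces at least an additional $\eta \cdot |Z|$ of age cost at step~2 relative to the case~(1) computation, so the case~(1) bound holds a fortiori. In either route the net effect telescopes as above and yields the announced strict inequality $l(\bs{b}) - s^*(\bs{b}) < K_i(\eta)$.
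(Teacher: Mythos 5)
Your reduction of the first claim to the second (via $K_i(\eta) \leq K(\eta)$) is exactly what the paper does, and your choice to argue by a perturbation-and-coupling comparison is the right general strategy. However, your perturbation is ``jump to the freshest packet $\tilde s = l(\bs b)$,'' whereas the paper perturbs only to the \emph{next non-minimum} packet $\tilde s(\bs b) = \min\{k > s: b_k > v_{\min}\}$ and then maintains a carefully designed coupling (``select what the original selects if possible, else select the oldest'') that holds for however many steps it takes the chains to coincide. That design choice is not cosmetic: it is exactly what lets the paper bound the accumulated cost difference without ever encountering your ``case~(2).''

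The genuine gap in your argument is the treatment of case~(2). Your two resolutions do not hold up. Option~(b) is in fact backwards: writing $\bs q := \bs b_{\geq s+1}\|\bs V^Z$ and $\bs q' := \bs V^Z$, the Bellman inequality $C_{\bs h^*}(\bs b, s) \leq C_{\bs h^*}(\bs b, l(\bs b))$ rearranges to
\[
\tfrac 1 \mu\textstyle\sum_{k=s}^{l(\bs b)-1}b_k - \eta\bigl(l(\bs b)-s\bigr) \;\geq\; E\bigl[h^*(\bs q)\bigr] - E\bigl[h^*(\bs q')\bigr].
\]
In case~(1), Property~\ref{prop:algorithm} gives the \emph{equality} $h^*(\bs q) = \tfrac 1 \mu\sum_{k=s+1}^{l(\bs b)}b_k + h^*(\bs q')$, which is what makes the telescoping close to yield $\frac 1 \mu(v_i - b_{l(\bs b)}) \geq \eta(l(\bs b)-s)$. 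In case~(2), however, $s^*(\bs q) \leq l(\bs b)-s$ means the prefix is \emph{not} shed in one step, and the best you have from Property~\ref{prop:opt_soln}(ii) is the two-sided estimate $0 \leq h^*(\bs q) - h^*(\bs q') \leq \tfrac 1 \mu\sum_{k=s+1}^{l(\bs b)}b_k$ with the upper end \emph{strict}. So $E[h^*(\bs q)] - E[h^*(\bs q')]$ is strictly \emph{smaller} than the case-(1) value on that event, which weakens the derived bound on $\eta(l(\bs b)-s)$ rather than strengthening it. Case~(2) therefore does not ``hold a fortiori''---it is where the bound degrades. Option~(a), iterating the exchange on the residual prefix, is plausible in principle, but to make it rigorous you would have to track the age and distortion ledgers across a random number of steps and over a residual buffer whose future evolution is itself random; at that point you are essentially rebuilding the paper's multi-step coupling, which also requires the extra case ``the perturbed process eventually takes a $v_{\min}$ it would otherwise have skipped'' that your sketch never encounters. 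The minor boundary issue (you obtain $l(\bs b) - s^*(\bs b) \leq K_i(\eta)$ rather than $< K_i(\eta)$ when $(v_i - v_{\min})/(\eta\mu)$ is an integer) is fixable by the tie-breaking convention you mention, and is not the substantive problem.
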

	\begin{proof} See Appendix \ref{app:suff_buffer_size}.
	\end{proof}

	\thmref{thm:suff_buffer_size} implies that when the policy iteration terminates, the policy it outputs not only solves the Bellman equation for the state space $\cV^{\leq M}$ for every $M \geq K(\eta)$, but it also solves the Bellman equation for the countable state space $\cV^*$. Since $h^*(\bs{b})$ is finite and $s^*(\bs{b})$ is attained for every $\bs{b}$, a straightforward extension of Proposition 2.1 in \cite[Chapter 4]{Bertsekas} concludes that $s^*(\bs{b})$ is indeed the optimal policy that attains $J^*(\eta)$. Let $\bs{S}^* = \{s^*(\bs{B}_i)\}_{i \geq 0}$ be the random sequence of the actions taken by the stationary and deterministic policy $s^*(\bs{b})$. Recall the inequality 
	\begin{equation}\label{eq:ineq_renewal}
	J^*(\eta) \leq D^{(\bs{S}^*)} + \eta \Delta_e^{(\bs{S}^*)}
	\end{equation}
	given in Theorem \ref{thm:reverse_fatou}. Furthermore, observe that the buffer state process $\{\bs{B}_i\}$ controlled by $\bs{S}^*$ is a renewal process --- this follows from Lemma \ref{lem:unichain}. Hence, by the renewal reward theorem \cite[Theorem 3.6.1]{Ross} we have
	\begin{equation}
	\begin{split}
	\Delta_e^{(\bS^*)}  &=  E\bigg[\limsup_{i \to \infty} \frac 1 i \sum_{j=1}^i(T_j-S_j)\bigg],\\
	&=  E\bigg[\lim_{i \to \infty} \frac 1 i \sum_{j=1}^i(T_j-S_j)\bigg]\\
	&= \lim_{i \to \infty} \frac 1 i \sum_{j=1}^i  E\big[T_j-S_j\big]
	\end{split}
	\end{equation}
	and similarly
	\begin{equation}
		D^{(\bS^*)}  = \frac 1 \mu \lim_{i \to \infty} \frac 1 i \sum_{j=1}^i E\big[D(\bs{V}^{T_j},S_{j-1},S_{j})\big].
	\end{equation}
	The above shows that in fact the inequality \eqref{eq:ineq_renewal} is an equality. Therefore, the optimal policies for the MDP give the tangent lines to the exact boundary curve of the achievable $(\Delta_e,D)$ region. Consequently, by varying $\eta$, this curve can be found. We end this section with the following corollary that summarizes the above results.
	
	\begin{corollary}\label{corr:results}
		The optimal policy among untruncated state space policies is attained with a buffer size $K(\eta)$ and can be found with the policy iteration algorithm run over the state space $\cV^{\leq K(\eta)}$, which returns $J^*(\eta)$. Furthermore, the least upper bound to the family of straight lines $D + \eta \Delta_e = J^*(\eta)$, $\eta >0$ gives the boundary of the achievable $(\Delta_e,D)$ region.
	\end{corollary}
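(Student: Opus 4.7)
My plan is to assemble the pieces already established in Sections~\ref{sec:mdp_formulation}--\ref{sec:exact_buffer}. Lemma \ref{lem:unichain} guarantees that Algorithm~\ref{alg:policy_iter} applied on any finite state space $\cV^{\leq M}$ terminates at a unichain optimal stationary policy; Theorem~\ref{thm:suff_buffer_size} then specialises this to $M \geq K(\eta)$ and shows that the optimal truncated policy never looks further back than $K(\eta)$. The policy produced with $M = K(\eta)$ therefore depends only on the $K(\eta)$ most recent buffer entries and extends canonically to a stationary policy $\bs{S}^*$ on the countable state space $\cV^*$ that still solves the Bellman equation. By the countable-state extension of Proposition~2.1 in \cite[Chapter~4]{Bertsekas} cited just before the corollary, $\bs{S}^*$ attains $J^*(\eta)$, establishing the first sentence of the corollary.

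To close the inequality in \eqref{eq:ineq_renewal} I would repeat the argument given immediately before the corollary: $\bs{S}^*$ is unichain and the $V_i$ are i.i.d., so the buffer process $\{\bs{B}_i\}$ is regenerative and the renewal reward theorem converts the $\limsup$ expressions in \eqref{eq:simple_age} and \eqref{eq:simple_D} into deterministic limits of expectations. Combining this with the upper bound from Theorem~\ref{thm:reverse_fatou} forces $J^*(\eta) = D^{(\bs{S}^*)} + \eta\Delta_e^{(\bs{S}^*)}$, so the infimum in \eqref{eq:infimum} is attained.

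For the second claim I would invoke convex duality. The very definition of $J^*(\eta)$ as an infimum ensures that every achievable point $(\Delta_e^{(\bs{S})}, D^{(\bs{S})})$ lies weakly above the line $D + \eta\Delta_e = J^*(\eta)$, while $\bs{S}^*$ places the line in contact with the achievable region. Hence the pointwise supremum of the family $\{D + \eta\Delta_e = J^*(\eta)\}_{\eta>0}$ is the lower envelope of the closed convex hull of the achievable region. To identify this envelope with the actual boundary one needs a time-sharing construction: alternating the optimal stationary policies $\bs{S}^*(\eta_1)$ and $\bs{S}^*(\eta_2)$ over growing blocks of speaking times produces a selection process in $\cS_2'$ whose long-run $(\Delta_e,D)$ pair equals any desired convex combination of the two originals, which gives convexity of (the closure of) the achievable region.

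The main obstacle is making this time-sharing step fully rigorous: one must verify that switching between two unichain stationary policies at long but finite block boundaries keeps the process in $\cS_2'$ and that the averaged costs converge to the intended convex combination. The finite second moments of $Z_i$ and of $T_i-S_i$ (by membership in $\cS_2$) control the boundary effects, and the unichain property ensures that each block reaches its stationary regime before the next switch; once these are in hand, the corollary reduces to a routine repackaging of the earlier results together with the convex-duality observation above.
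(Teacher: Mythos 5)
Your first two paragraphs reproduce the paper's own argument almost exactly: Lemma~\ref{lem:unichain} and Theorem~\ref{thm:suff_buffer_size} give a stationary optimal policy on $\cV^{\leq K(\eta)}$ that also solves the countable-state Bellman equation, the countable-state extension of Proposition~2.1 in \cite[Chapter~4]{Bertsekas} identifies it with $J^*(\eta)$, and the renewal-reward theorem upgrades the $\limsup$'s to limits so that the inequality in \eqref{eq:ineq_renewal} becomes an equality. That is precisely the chain of reasoning the paper gives immediately before the corollary.

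Where you genuinely depart from the paper is the final claim, that the supremum of the lines $D + \eta\Delta_e = J^*(\eta)$ is the boundary of the achievable region. The paper simply appeals to the remark at the top of Section~\ref{sec:mdp_formulation} that ``it is known'' that this family of weighted costs characterises the boundary, and concludes from the fact that each line both lower-bounds the region and touches it at $(\Delta_e^{(\bs{S}^*)},D^{(\bs{S}^*)})$. You instead make the convex-duality structure explicit and observe, correctly, that the supremum of the supporting lines identifies the boundary only if the achievable region has a convex lower boundary, and you propose a block-time-sharing construction to establish this. That is a genuine strengthening: the paper leaves the convexity unjustified, and a reader who wanted full rigour would have to supply exactly the argument you sketch. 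Your own version, however, also stops short of a complete proof, since you flag the block-length choice, the $\cS_2'$-membership of the spliced process, and the control of transients as ``obstacles''; for a polished proof you would need to pin these down (e.g.\ alternating blocks of lengths proportional to $\theta n$ and $(1-\theta)n$ with $n\to\infty$, using the finite second moments of $Z_i$ and $T_i - S_i$ to show the transient contribution vanishes and the long-run averages converge to the convex combination). In short: your proposal matches the paper where the paper gives a proof, and you identify and partially address a gap that the paper does not address at all, but the time-sharing step should be completed rather than hedged.
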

	\subsection{An Efficient Algorithm to Find the $(\Delta_e, D)$ Region}
	 Although one can run the generic policy iteration algorithm to find the tangent lines to the achievable $(\Delta_e, D)$ region, this turns out to be highly inefficient. In this section, we provide an efficient modification of the policy iteration algorithm. The main idea is to exploit the following property. We omit its proof as it is a straightforward extension of Property 1(i).
	\begin{property}\label{prop:algorithm}
		Consider the policy iteration algorithm (Algorithm \ref{alg:policy_iter}). For any $\bs{b}$, the policy $s^{(i)}(\bs{b})$ is either equal to $s^{(i)}(\bs{b}_{\geq 2}) + 1$, or to $1$. Furthermore, if $s^{(i)}(\bs{b}) = s^{(i)}(\bs{b}_{\geq 2}) + 1$, then $h^{(i)}(\bs{b}) = b_1/\mu + h^{(i)}(\bs{b}_{\geq 2})$.
	\end{property}

Property 2 implies that if $s^{(i)}(\bs{b}) \neq 1$, then there must exist a $\bs{b}'$, which is a suffix of $\bs{b}$, and with $s^{(i)}(\bs{b}') = 1$. Consequently,
\begin{equation}\label{eq:cost_to_parent}
h^{(i)}(\bs{b}) = h^{(i)}(\bs{b'}) + \sum_{b \in \bs{b}\setminus \bs{b}'} b/\mu.
\end{equation}
 The above observation leads to an improvement in the policy evaluation stage of the algorithm as we shall see shortly. Denote the set of states $\bs{b}'$ with $s^{(i)}(\bs{b}') = 1$ as $\cB_1$. In light of \eqref{eq:cost_to_parent}, we see that since the relative values for the other states $\bs{b} \notin \cB_1$ can be determined based on the states in $\cB_1$, it is sufficient for the linear system in the policy evaluation step to include the states in $\cB_1$. We observed empirically that $|\cB_1|$ is much smaller compared $ |\cV^{\leq K}|$. As solving a linear system with $n$ variables has $O(n^3)$ complexity, reducing the set of variables to $\cB_1$ results in a significant improvement.

If the algorithm is modified as suggested above, the policy evaluation step gives $h^{(i)}(\bs{b}')$, $\bs{b}' \in \cB_1$, and the average cost $\lambda^{(i)}$. To find other $h^{(i)}(\bs{b})$'s, we refer to equation \eqref{eq:cost_to_parent}, which suggests that the appropriate data structure to represent the states is a tree structure, denoted as $\cT$, where a state $\bs{b}$ has children $\{b \|\bs{b}\}_{b\in \cV}$. That is, $\bs{b}_{\geq 2}$ is the parent of $\bs{b}$, denoted by $\parent(\bs{b})$, and every suffix of $\bs{b}$ is its ancestor. Then the final statement of Property 2 translates into the recursion $h^{(i)}(\bs{b}) = b_1/\mu + h^{(i)}(\parent(\bs{b}))$ for $\bs{b} \notin \cB_1$.

Along with Property \ref{prop:algorithm}, Theorem \ref{thm:suff_buffer_size} also provides simplifications for the search of an optimal policy. Namely, for a state $\bs{b}$, and for any iteration $j$, $s^{(j)}(\bs{b}) \neq s$ for an $s < l(\bs{b}) - K_{b_s}(\eta)$, where $K_{b_s}(\eta) = K_i(\eta)$ if $b_{s} = v_i$. Using the above facts, we are ready to provide a more efficient version of the policy iteration algorithm, Algorithm \ref{alg:mod_policy_iter}, which is tuned for our problem. The for loops over the tree $\cT$ (lines 9 and 12) are in breadth-first manner. Recall that $h^{(i)}(b) = 0$ for $b \in \cV$ and we set $h^{(i)}(\delta) = 0$ where $\delta$ denotes the empty string.
 	
	\begin{algorithm}\label{alg:mod_policy_iter}
		
		\SetKwProg{Init}{Initialize}{}{}
		\SetKwRepeat{Do}{do}{while}
		
		\caption{Efficient Policy Iteration v1($\eta$)}		
		\KwIn{$\eta$}
		\KwOut{$J^*(\eta)$}
		\Init{}{
		$K \gets K(\eta)$\;
		$s^{(0)}(\bs{b}) \gets l(\bs{b})$\;
		$\cB_1 \gets \{v_{\min}\}$\;
		Set $\delta$ as the root of $\cT$ and add the children $\{b\|\bs{b}\}_{b \in \cV}$ for every buffer $\bs{b} \in \cT$ such that $l(\bs{b}) < K$\;
		$i \gets 0$\;
	}
	\Repeat{$\bs{s}^{(i+1)} = \bs{s}^{(i)}$}{
		\tcc{Policy evaluation}
Find $\lambda^{(i)}$, $h^{(i)}(\bs{b}')$, $\bs{b}' \in \cB_1$ by solving \eqref{eqn:linear_solve_K}\;
		\For{$\bs{b} \in \cT\setminus \cB_1$ with $l(\bs{b}) > 1$}{
			$h^{(i)}(\bs{b}) \gets h^{(i)}(\parent(\bs{b})) + b_1/\mu$\;
		}
	\tcc{Policy update}
	$\cB_1 \gets \{v_{\min}\}$\;
	\For{$\bs{b} \in \cT$ with $l(\bs{b}) > 1$}{
		\If{$l(\bs{b}) < K_{b_1}(\eta)$ and $C_{\bs{h}^{(i)}}(\bs{b},1) < C_{\bs{h}^{(i)}}(\parent(\bs{b}),s^{(i+1)}(\parent(\bs{b}))) + b_1/\mu$}{
				$s^{(i+1)}(\bs{b})  \gets 1$\;
				Add $\bs{b}$ to $\cB_1$\;
		}
		\Else{$s^{(i+1)}(\bs{b}) \gets s^{(i+1)}(\parent(\bs{b}))+1$\;
		}
	}
	$i \gets i+1$.}
\Return $\lambda^{(i)}$
	\end{algorithm}

Although Algorithm \ref{alg:mod_policy_iter} is much more efficient compared to the generic policy iteration, one needs to evaluate $C_{\bs{h}^{(i)}}(\bs{b},s)$'s --- defined in \eqref{eq:defn_c} --- for both policy evaluation and update stages. Observe that calculating one of these quantities takes exactly $|\cV^{\leq K}| = O(|\cV^{K}|)$ iterations. We can allocate some memory to store these quantities and reduce the time complexity. First, let $\parentone(\bs{b})$ be the longest ancestor of $\bs{b}$ that is in $\cB_1$. Then, $\parentone(\bs{b}) = \parentone(\parent(\bs{b}))$ and $\parentone(\bs{b}') = \bs{b}'$ for $\bs{b}' \in \cB_1$. Moreover, let the cost of an edge between $\bs{b}$ and its parent be $b_1/\mu$ and let $\cost^{(i)}(\bs{b})$ denote the cost of going from $\bs{b}$ to $\parentone(\bs{b})$. Obviously, $\cost^{(i)}(\bs{b}') = 0$ for $\bs{b}' \in \cB_1 \cup \{\delta\}$.

Now, observe that the linear system of equations in \eqref{eqn:linear_solve_K} can be written as
\begin{align}\label{eq:linear_system_parent}
h^{(i)}(\bs{b}') + \lambda &= \eta(l(\bs{b}')-1) + E[h^{(i)}(\parent(\bs{b}')\|\bs{V}^Z)]\notag\\
&=\eta(l(\bs{b}')-1) + E\bigg[\cost^{(i)}(\parent(\bs{b}')\|\bs{V}^Z)+ h(\parentone(\parent(\bs{b}')\|\bs{V}^Z))\bigg].
\end{align}

The policy iteration algorithm starts with $s^{(0)}(\bs{b}) = l(\bs{b})$. Therefore, $h^{(0)}(\bs{b}) = \cost^{(0)}(\bs{b}) = \sum_{i< l(\bs{b})} b_i$ and one can also set $\parentone(\bs{b}) = \vmin$. We will initialize the procedure accordingly so that the policy evaluation step makes use of these quantities at the first iteration. We aim to update $\cost^{(i)}(\bs{b})$ and $\parentone(\bs{b})$ in the policy update step. To this end, we need the temporary variable 
\begin{equation}
\temp^{(i)}(\bs{b}) = \min_{s\leq l(\bs{b})} C_{\bs{h}^{(i)}}(\bs{b},s).
\end{equation}
Then the condition for the policy update becomes $C_{\bs{h}^{(i)}}(\bs{b},1) < \temp^{(i)}(\parent(\bs{b})) + b_1/\mu$; and $\temp^{(i)}(\bs{b}) $ will be updated accordingly. Note that for $b \in \cV$, $\temp^{(i)}(b) = \lambda^{(i)}$. The modified version is given in Algorithm \ref{alg:mod_policy_iter2}.

\begin{algorithm}\label{alg:mod_policy_iter2}
	
	\SetKwProg{Init}{Initialize}{}{}
	\SetKwRepeat{Do}{do}{while}
	
	\caption{Efficient Policy Iteration v2($\eta$)}		
	\KwIn{$\eta$}
	\KwOut{$J^*(\eta)$}
	\Init{}{
		$K \gets K(\eta)$\;
		$s^{(0)}(\bs{b}) \gets l(\bs{b})$\;
		$\cB_1 \gets \{v_{\min}\}$\;
		Set $\delta$ as the root of $\cT$ and add the children $\{b\|\bs{b}\}_{b \in \cV}$ for every buffer $\bs{b} \in \cT$ such that $l(\bs{b}) < K$\;
		For all $\bs{b}$, $\cost(\bs{b}) \gets \sum_{i< l(\bs{b})} b_i$ and $\parentone^{(0)}(\bs{b}) \gets\vmin$\;
		$i \gets 0$\;
	}
	\Repeat{$\bs{s}^{(i+1)} = \bs{s}^{(i)}$}{
		\tcc{Policy evaluation}
		Find $\lambda^{(i)}$, $h^{(i)}(\bs{b}')$, $\bs{b}' \in \cB_1$ by solving \eqref{eq:linear_system_parent}\;
		\For{$\bs{b} \in \cT\setminus \cB_1$ with $l(\bs{b}) > 1$}{
			$h^{(i)}(\bs{b}) \gets h^{(i)}(\parent(\bs{b})) + b_1/\mu$ if $\bs{b} \notin \cB_1$\;
		}
		\tcc{Policy update}
		$\cB_1 \gets \{v_{\min}\}$\;
		\For{$b \in \{\vmin,\ldots,v_{\max}\}$}{
			$\temp^{(i)}(b) \gets \lambda^{(i)}$.
		}
		\For{$\bs{b} \in \cT$ with $l(\bs{b}) > 1$}{
			\If{$l(\bs{b}) < K_{b_1}(\eta)$ and $C_{\bs{h}^{(i)}}(\bs{b},1) < \temp^{(i)}(\parent(\bs{b})) + b_1/\mu$}{
				$s^{(i+1)}(\bs{b}) \gets 1$\;
				Add $\bs{b}$ to $\cB_1$\;
				$\temp^{(i)}(\bs{b}) \gets C_{\bs{h}^{(i)}}(\bs{b},1)$\;
				$\cost^{(i)}(\bs{b}) \gets 0$\;
				$\parentone^{(i)}(\bs{b}) \gets \bs{b}$\;
			}
			\Else{$s^{(i+1)}(\bs{b}) = s^{(i+1)}(\parent(\bs{b}))+1$\;
				$\temp^{(i)}(\bs{b}^{(i)}) \gets \temp^{(i)}(\parent(\bs{b})) + b_1/\mu$\;
				$\cost^{(i)}(\bs{b}) \gets \cost^{(i)}(\parent(\bs{b})) + b_1/\mu $\;
				$\parentone^{(i)}(\bs{b}) \gets \parentone^{(i)}(\parent(\bs{b}))$\;
			}
		}
		$i \gets i+1$.}
	\Return $\lambda^{(i)}$
\end{algorithm}

Recall that the number of elements in $\cT$ is $O(|\cV|^{K})$. The complexity of a single iteration in Algorithm \ref{alg:mod_policy_iter2} is found as follows:
\begin{itemize}
	\item[(i)] In the policy evaluation, the equation system \eqref{eq:linear_system_parent} is constructed in $O(|\cB_1||\cV|^{K})$ time, and solved in $O(|\cB_1|^3)$.
	\item[(ii)]The policy update runs over all states. Furthermore, for each state $\bs{b}$, calculation of $C_{\bs{h}^{(i)}}(\bs{b},1) $ also requires iterations over all states. Hence, it has $O(|\cV|^{2K})$ time complexity.
\end{itemize}
As we stated before, $|\cB_1|$ is usually very small compared to $|\cV|^K$. The bottleneck then seems to be the policy update stage, which requires $O(|\cV|^{2K})$ steps. However, the policy update stage can be further improved such that the complexity decreases to $O(K|\cV|^{K})$. This modification is given in Appendix \ref{app:policy_improvement}.

Now that we have an efficient algorithm yielding $J^*(\eta)$, and with help of Corollary 1, we should be able to find the boundary curve by varying $\eta$. One may notice that initializing the trees and policies for every $\eta$ can be avoided with a minor modification. The idea is as follows: Choose a decreasing sequence $\eta_1 > \eta_2 > \ldots > \eta_n$ with $\eta_1 = \eta_{\max} = \tfrac 1 \mu (v_{\max}-\vmin)$. Note that $K(\eta_{1}) = 1$. Run the algorithm in the order of $\eta_m$'s and when $K(\eta_{m+1}) > K(\eta_{m})$, append new states to the tree $\cT$. At $(m+1)^\text{th}$ run, one can also start with the optimal policy found for $\eta_{m}$. If $\eta_1,\ldots,\eta_n$ are chosen densely, the boundary curve can be well-approximated.

%
	However, finding the boundary region everywhere would be optimistic. Theorem \ref{thm:suff_buffer_size} implies that the necessary buffer size scales with $\frac 1 \eta$. This suggests that even though the algorithm gives the almost exact curve, it is impractical to do so. To overcome this difficulty, one may rely on approximate dynamic programming algorithms; or resort to Monte Carlo estimations for the policy evaluation \cite{Bertsekas}.
	
	
	Note that any straight line $D + \eta\Delta_e = J^*(\eta)$ in the $(\Delta_e,D)$ plane is a lower bound to the feasible region. Hence, any family of straight lines obtained in such manner gives a lower bound in general --- and if we were able to run the algorithm for every $\eta > 0$, this would give the exact boundary curve. As before, let $\eta_1 > \eta_2 > \ldots >\eta_n$ be a densely chosen sequence for which the algorithm is run. Let $\Delta_e^{(n-1,n)}$ be the abscissa of the point where the last two lines $D + \eta_{n-1}\Delta_e = J^*(\eta_{n-1})$ and $D + \eta_n\Delta_e = J^*(\eta_n)$ intersect. Then, one can see that the supremum of the straight lines obtained for $\eta_1,\ldots,\eta_n$ approximately gives the tradeoff curve for $\Delta_e \leq \Delta_e^{(n-1,n)}$, while it gives a lower bound for $\Delta_e > \Delta_e^{(n-1,n)}$. This is because all intersection points that lie on the supremum, and the line segments connecting them are achievable. This straight-line converse bound is referred as `PI converse' in the numerical examples.
	
	We end this section with some numerical examples. We have calculated the optimal policies with Algorithm \ref{alg:mod_policy_iter2}, and unfortunately we have not observed any simple structure for optimal policies for $|\cV| = 2$. We also evaluated some simple policies described below and compared their performances with the family of straight lines generated by Algorithm \ref{alg:mod_policy_iter2}, referred as `PI' in Figures \ref{fig:geom1} and \ref{fig:geom2}. These simple policies are:
	\begin{itemize}
		\item[(S1)] Send the oldest important data within a maximum buffer size $K$.
		\item[(S2)] Send the newest important data within a maximum buffer size $K$.
		\item[(S3)] Send the newest important data that has arrived more than $K$ slots ago. If there is no such data, send the oldest important one.
	\end{itemize}
	
	Each strategy above induces a finite-state Markov chain. Moreover, when $Z$ is geometrically distributed, all the Markov chains induced by these strategies have closed-form stationary distributions. $\Delta_e$ and $D$ pertaining to these strategies will accordingly have closed-form expressions. We provide these expressions in Appendix \ref{app:simple}.
	
	To compare these strategies, we also give a simple converse bound and observe their approach towards this bound for large $\Delta_e$.
	
	\begin{lemma}\label{lem:simple_cnv_1} Suppose $V = v_{i}$ with probability $\alpha_i$. Let $j^*$ be the maximum index such that $\sum_{i=j^*}^{|\cV|}\alpha_i \geq \frac 1 \mu$. Then for any $\Delta_e^{(\bs{S})}$, $D^{(\bs{S})} \geq D_{\min} =  \sum_{i = 1}^{j^*-1}\alpha_i v_i + \big(\sum_{i=j^*}^{|\cV|}\alpha_i - \frac 1 \mu\big) v_{j^*}$.
	\end{lemma}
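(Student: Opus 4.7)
The plan is to show that no selection process can deliver more than a $1/\mu$ fraction of the packets in the long run, and then combine this with a greedy rearrangement argument: among all ways to drop a given number of packets, the minimum total importance is obtained by dropping the lowest-importance ones. The quantity $D_{\min}$ is precisely the resulting average cost when exactly a fraction $1-1/\mu$ of packets must be dropped greedily, and is therefore a universal lower bound regardless of the achieved age.

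To make this precise, I would fix any $\bs{S}$ and let $A_t := \{S_j : T_j < t\} \subseteq \{1,\ldots,t\}$ denote the indices of packets delivered by time $t$. Since the $S_j$'s are strictly increasing and at most one packet is delivered per speaking time, $|A_t| \leq i(t)$. Writing $D_t^{(\bs{S})} = \frac{1}{t} \sum_{i \leq t,\, i \notin A_t} V_i$ and using the elementary observation that the sum of any $m$ of the values $V_1,\ldots,V_t$ is bounded below by the sum of their $m$ smallest entries, together with monotonicity of this lower bound in $m$, I obtain the $\bs{S}$-independent pathwise inequality
\begin{equation*}
D_t^{(\bs{S})} \;\geq\; \frac{1}{t} \sum_{k=1}^{t-i(t)} V_{(k)},
\end{equation*}
where $V_{(1)} \leq \cdots \leq V_{(t)}$ are the order statistics of $V_1,\ldots,V_t$.

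The remaining step is to compute the almost-sure limit of the right-hand side. The elementary renewal theorem (valid since $E[Z_1] = \mu < \infty$) gives $i(t)/t \to 1/\mu$ a.s., and the strong law of large numbers gives $N_j(t)/t \to \alpha_j$ a.s., where $N_j(t) := |\{i \leq t : V_i = v_j\}|$. The definition of $j^*$ is equivalent to $\sum_{i < j^*}\alpha_i \leq 1-1/\mu < \sum_{i \leq j^*}\alpha_i$, so for all large enough $t$ the $t - i(t)$ smallest entries of $V_1,\ldots,V_t$ consist of all $N_i(t)$ copies of $v_i$ for $i < j^*$ together with $t - i(t) - \sum_{i<j^*} N_i(t)$ copies of $v_{j^*}$. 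A direct count shows that the lower bound converges a.s.\ to $D_{\min}$; hence $\liminf_t D_t^{(\bs{S})} \geq D_{\min}$ a.s., and since $\limsup \geq \liminf$ the same holds for $\limsup$, giving $D^{(\bs{S})} \geq D_{\min}$ after taking expectations. The main bookkeeping step is this identification of the $t-i(t)$ smallest entries; the boundary case $\sum_{i<j^*}\alpha_i = 1-1/\mu$ requires slightly more care but contributes only an $O(1/t)$ error that does not affect the almost-sure limit.
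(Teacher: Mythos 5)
Your proof is correct and fills in precisely the two-step argument the paper only sketches (``the sender can send at most $1/\mu$ fraction of the data; we then optimize over its selection''): the renewal bound $i(t)/t \to 1/\mu$ a.s.\ limits the fraction of packets that can be delivered, and the order-statistics (greedy) rearrangement identifies which fraction to keep. The only small imprecision is the claimed ``$O(1/t)$'' error in the boundary case $\sum_{i<j^*}\alpha_i = 1-1/\mu$ --- the fluctuation of $t-i(t)-\sum_{i<j^*}N_i(t)$ about zero is typically of order $\sqrt{t}$, not $O(1)$ --- but it is still $o(t)$, so the almost-sure limit and your conclusion are unaffected.
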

	\begin{proof}[Proof Sketch]
		The sender can send at most $\frac 1 \mu$ fraction of the data. We then optimize over its selection of data to obtain the result.
	\end{proof}

In the first numerical example, provided in Figure \ref{fig:geom1}, $\Pr(Z=1) = 0.2$ and $\cV = \{1,20\}$ with $\Pr(V=1) = 0.7$. The blue family of straight lines correspond to the lines $D + \eta\Delta_e = J^*(\eta)$, obtained for different $\eta$ values. We could calculate until $\eta = \frac{v_{\max}-v_{\min}}{17\mu}$, which indicates that we used a maximum buffer size of 17. The region lying under this family of lines is unachievable, and the supremum of this family gives the boundary of the feasible region until the red solid line, which is the straight-line converse bound described above. The curves corresponding to strategies S1, S2 and S3 are red dashed, green dashed and cyan dotted curves, and plotted for $K \leq 20$. The simple lower bound $D_{\min} = 2.7$ is drawn as the solid black line. One can see that S2 nearly coincides with PI. Note that we observe an asymptotic behavior as the sender will never be able to allocate all of its resources to send all of the important packets.
	
	\begin{figure}[h!]
		\centering
		\includegraphics[scale = 0.65]{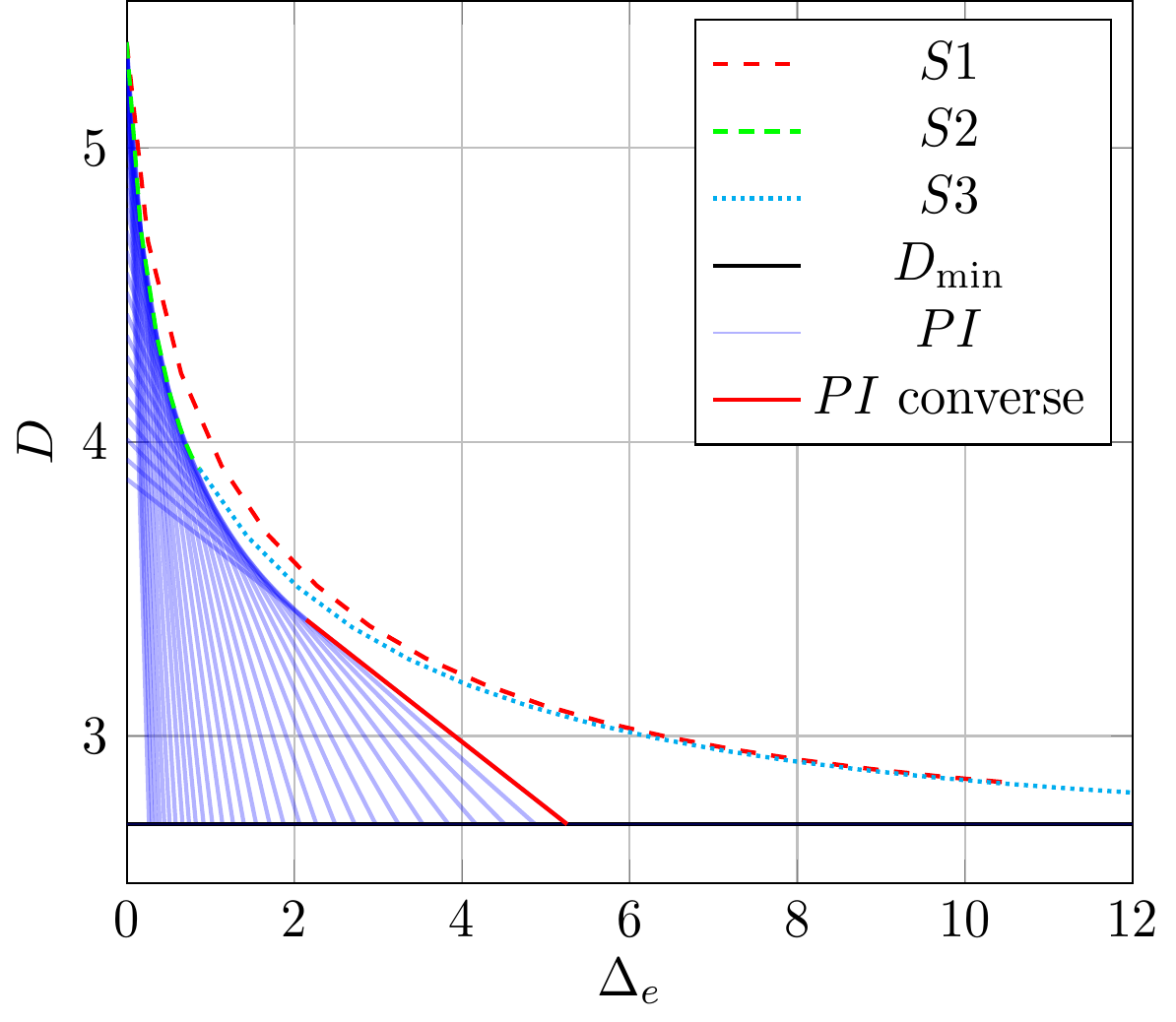}\vspace*{-5pt}\caption{Comparison of the strategies for $\cV = \{1,20\}$ and $\Pr(V = 1) = 0.7$. $Z$ is taken as a Geometric random variable with success probability $0.2$. (S2 almost coincides with PI)}\label{fig:geom1}
	\end{figure}

The second numerical example differs from the first one with $\Pr(Z=1) = 0.3$ and $\Pr(V=1) = 0.8$. The simple strategies calculated for $K\leq 40$ together with the policy iteration results obtained until a buffer size of 17 are plotted in Figure \ref{fig:geom2}. Here, $D_{\min} = 0.7$ could be achieved in finite-age as the sender can send all the important packets while keeping  $\sup_i E[(S_i-T_i)^2] < \infty$.

\begin{figure}[h]
	\centering
	\includegraphics[scale = 0.65]{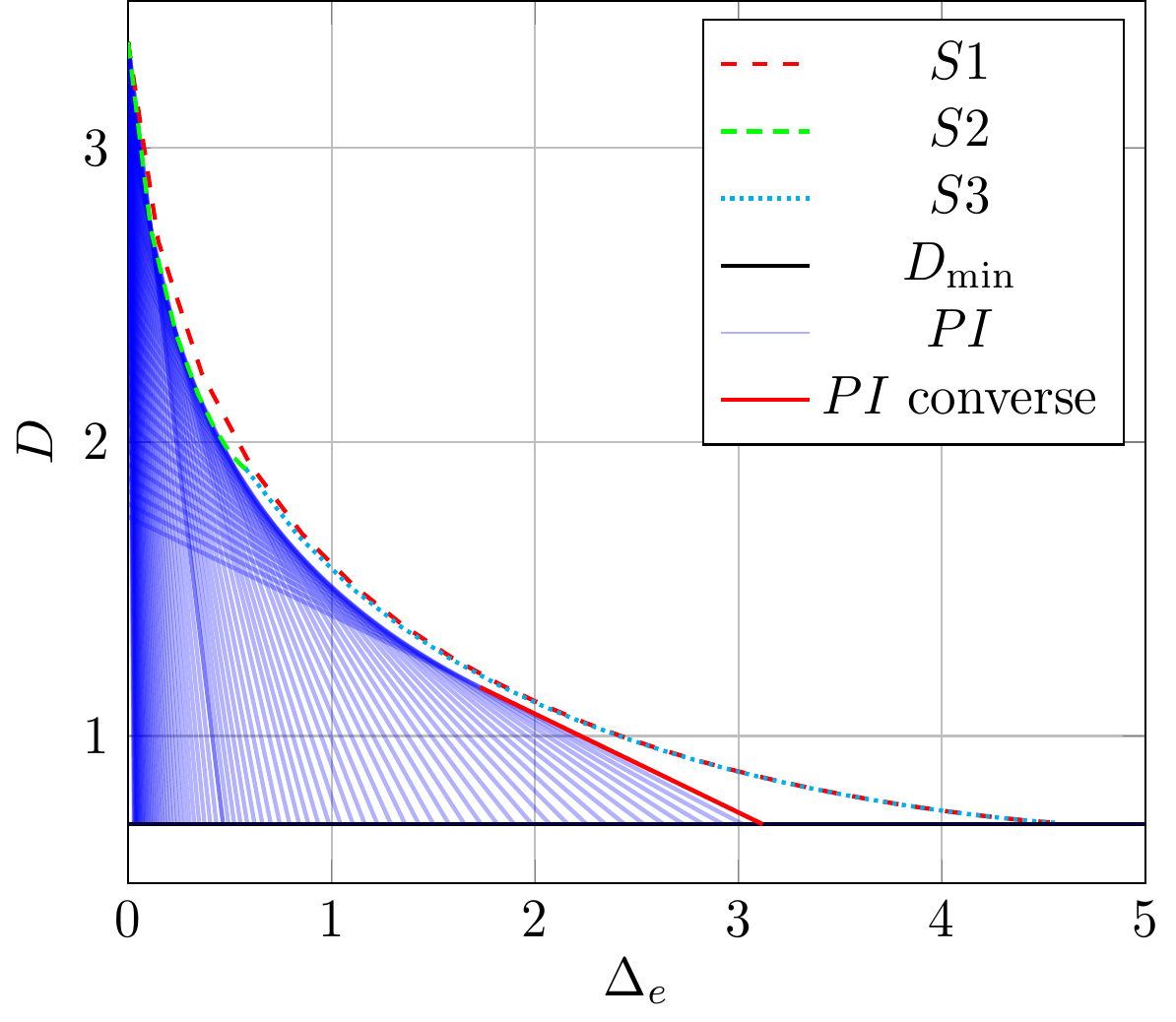}\vspace*{-5pt}\caption{Comparison of the strategies for $\cV = \{1,20\}$ and $\Pr(V = 1) = 0.8$. $Z$ is taken as a Geometric random variable with success probability $0.3$.}\label{fig:geom2}
\end{figure}

\section{Relation to an Erasure Channel with Feedback}\label{sec:erasure}

We have observed that the model discussed primarily in this work is similar to transmitting a stream of packets over an erasure channel with feedback. Recall that the setting for transmission over discrete memoryless channels with feedback requires the feedback for the data transmitted at time $t$ to be revealed just after time $t$. For an erasure channel, the knowledge of an erasure event indicator at time $t$, i.e., $\indic\{X_t \text{ is erased}\}$ is sufficient for a perfect feedback.

We note that the main difference between our model and a feedback erasure channel stems from the restriction that the feedback for data $t$ is revealed \emph{just after} time $t$. If we assume that the sender knows about a possible erasure event \emph{just before} time $t$; it may send the data $t$, or it may keep data $t$ in its buffer for a later transmission. This relaxation exactly gives the model we described with interspeaking times distributed according to a geometric distribution, i.e., $\Pr(Z_i = z) = (1-p)p^{z-1}$ where $p$ is the erasure probability for a discrete memoryless erasure channel. Also note that one can model some erasure channels with memory by varying the distribution of $Z$. 

Consider a modification of our model which requires that the constituent feedback is revealed just after the transmission. Now, we formalize the modified setting. Just before time $t$, the sender commits to a packet with timestamp $C_t \leq t$. Then, at time $t$, the committed packet $X_{C_t}$ is transmitted through the erasure channel. If the packet is erased, the sender commits to packet with timestamp $C_{t+1}$ just before time $t+1$ --- which is not necessarily equal to $C_t$ --- and the procedure is repeated until the committed data is sent. If the committed data $X_{C_t}$ is sent successfully, then $S_{i(t)+1} = C_t$ where $i(t) = \sup\{i \geq 0: T_i < t\}$; and the past of $S_{i(t)+1}$ contributes to the distortion and cannot be modified later. The age and distortion metrics are defined similar to the ones in Section \ref{sec:problem_definition}.

Observe that for any sequence of commitments $\{C_t\}_{t > 0} =: \bs{C}$, there must exist a selection procedure $\bs{S}$ in the original problem such that 
\begin{equation}\label{eq:feedback_bad}
	D^{(\bs{C})} + \eta\Delta^{(\bs{C})} \geq D^{(\bs{S})} + \eta\Delta^{(\bs{S})}.
\end{equation}

This is because in the original problem, selections are made with more information --- the erasure event is known beforehand. Conversely, for any \emph{stationary} policy $\bs{S}$ in the original problem, there exists a $\bs{C}$ in the latter problem where the sender commits to $C_t = t + s(\bs{B}_t) - l(\bs{B}_t)$ where $\bs{B}_t := V_{S_{i(t)}+1}^t$. As a consequence, the sender with no erasure information beforehand can also attain the infimal value $J^*(\eta)$, as we know from Corollary \ref{corr:results} that the optimal policy for the original problem is stationary. Together with the inequality \eqref{eq:feedback_bad}, we conclude that for both problems the tradeoff curve is the same. In brief, whether erasures are revealed just before or after transmissions do not change the tradeoff between age and distortion.
\section{When Timestamps Become Significant}\label{sec:timestamp}

We have shown that the optimal value for an $\eta > 0$ is attained with a bounded buffer policy, say of $K$. In the model so far, packets contain the timestamps as part of their headers.  Consequently, there was no need to send additional information to the receiver to tell it which packet among the $K$ packets in the buffer is being sent. If the packets do not have headers, this additional information must be included. If $K$ is much smaller compared to $|\cX|$, this additional information is insignificant.  In this section, we treat the case of headerless packets when $K$ is comparable to $|\cX|$. We take binary $\cX = \{0,1\}$ and $\cV = \{1,v\}$ with 1 being the important packet. We study the setting described in Section \ref{sec:problem_definition} but with the difference that the sender is allowed to send $N$ bits at each speaking time. We assume that $Z$ is distributed geometrically with success probability $p$, i.e. $\Pr(Z = 1) = p$.

Consider the optimal policy to attain $J^*(\eta)$ in \eqref{eq:infimum}, which is of bounded buffer size $K(\eta)$. Recall that at each speaking time, the sender is able to send one packet. If $X$ is binary, then without any coding, the optimal policy is feasible only if $N \geq 1 + \lceil\log(K(\eta))\rceil$; otherwise it is not able to describe the timestamps of selected data, e.g., for a state $\bs{b}$ with $l(\bs{b}) = K(\eta)$, the timestamps must be of length $\lceil\log(K(\eta))\rceil$ and the remaining one bit corresponds to the data. However, it seems unreasonable to use almost all of the $N$ bits for timestamp description. Can one come up with methods that do not require explicit timing information to be sent and allocate more bits to describe the data itself? The rest of this section elaborates on some possible tradeoffs with this point of view.

\subsection{Buffer Ignorant Strategies}

Think of the following strategy: Always send the most recent $N = \log(K(\eta))$ bits. In this case, it is easy to see $\Delta_e = 0$ and $D = \mu_V(1-p)^N = \mu_V K(\eta)^{\log(1-p)} \simeq \mu_V \big(\frac {v-1}{\eta}\big)^{\log(1-p)}$, where $\mu_V := E[V] = (1-q) + vq$, and $q = \Pr(V= v)$. We know that for the optimal policies described in Section \ref{sec:age_dist_tradeoff}, $D_{\min}$ given in \lemref{lem:simple_cnv_1} yields a lower bound for distortion. For a binary $X$, and thus $V$, one can obtain 
\begin{equation}
D_{\min} = \begin{cases}1-p, & p\geq q\\
\mu_V - pv& p < q
\end{cases}.
\end{equation}
If $\eta \geq \frac 1 {v-1}\big(\frac{\mu_V}{D_{\min}}\big)^{\frac 1 {\log(1-p)}}$ then $D \leq D_{\min}$, implying that the perfect timing information strategies are beaten by the timing ignorant strategy described as sending the most recent $N$ bits. In other words, one does better by sending $N$ bits of most recent data instead of sending one bit together with its timestamp.

The above arguments motivate the following question: What are the limits of these timing ignorant strategies? Note that both the sender and receiver know the speaking times $T_i$. Suppose for a moment that the receiver knows the selection times $S_i$ as well. With this assumption, the receiver has the perfect knowledge of the buffer length at time $T_i$, which is $T_i-S_{i-1}$. Hence, if the sender bases his strategies solely on its buffer size, ignoring the buffer content, it does not have to include any timing information and is able to use all its $N$-bit budget for sending data. An example could be as follows: Suppose $N=3$. Then the sender could send $1,3,5^\text{th}$ bits whenever the buffer size is $5$. Since the receiver knows the buffer size and the sender's strategy, it will know upon its reception of $3$ bits that they correspond to $1,3,5^\text{th}$ bits. Although we have previously coined the term `timing ignorant' for such strategies, a more suitable term could be `buffer ignorant'; as the sender ignores what is in its buffer.

\begin{remark}
	One may notice that this procedure is a simple online compression algorithm for the binary source $\{X_i\}$, with the restriction that the sender can only send $N$ bits at each speaking time. Together with the unsophisticated receiver that only constructs the $N$ bits it receives at each speaking time, this scheme operates at $\Delta_e  = 0$ and $D = \mu_V(1-p)^N$.
\end{remark}

Let us study the `buffer ignorant' strategies further. Since the transmitter does not use the buffer content and has to choose $N$ bits among them, a simple choice could be in size-$N$ bit contiguous chunks. Adopting the terminology from Section \ref{sec:age_dist_tradeoff}, such strategies correspond to sending $X_{S_i-N+1}^{S_i}$ at time $T_i$ from a buffer of size $L_i := T_i-S_{i-1}$. Nothing from the past $X^{S_i}$ can be sent after time $T_i$ and therefore with similar arguments we have previously done, one can formulate the problem of finding optimal selection times as a MDP. The corresponding MDP will have the buffer lengths as its states, which implies that we encounter another countable state-space problem with its state-space being $\mathbb{Z}^+$. When the buffer length is $l$, and the selection index is $s$, the one-step cost can be written as
\begin{equation}
g(l,s) =\mu_Vp(s-N)^+ + \eta(l-s).
\end{equation}
The corresponding Bellman equation is given by
\begin{equation}\label{eqn: Bellman_buff_ignorant}
h(l) + \lambda = \min_{s \leq l} \bigg\{\mu_V p (s-N)^+ + \eta(l-s) + E[h(l-s + Z)]\bigg\}
\vrule width0pt depth15pt
\end{equation}
for $l > 1$ and with $h(1) = 0$, where the buffer of length one is chosen as the reference state. It is not difficult to see that this choice also implies $h(l) = 0$ for $l \leq N$, as the sender can immediately empty the buffer for such states.

Similar to Property \ref{prop:algorithm}, we must have either $s^{*}(l) = N$ or $s^{*}(l) = s^{*}(l-1) + 1$ for the optimal policy. Therefore as $l$ increases, the optimal policy tends to leave more bits at the end. Although this observation suggests that the optimal policy may be attained with an unbounded buffer, one can show that this is not the case. Similar coupling arguments as we did in the proof of Theorem \ref{thm:suff_buffer_size} lead to the conclusion that the optimal policy cannot leave more than $\frac{N\mu_V}{\eta}$ bits at the end. Thus, a simple policy iteration algorithm run for a sufficiently large state-space also solves the Bellman equation for the infinite-state problem.

The optimal policies may not be simple-to-describe. However, when $Z$ is geometrically distributed, the numerical simulations indicate that single-threshold policies are optimal. These policies are characterized as 
\begin{equation}\label{eq:single_thresh}
s(l) = \min\{\max\{l-\tau,N\},l\}
\end{equation}
 for some $\tau \geq 0$. In other words, the sender always keeps $\tau$ unsent bits in the buffer if possible. We do not have an analytical proof for this result, but it is not unreasonable to believe that these simple policies are optimal because of the memorylessness property of $Z$. Recall that the states of Markov chains incurred by such strategies are described with the buffer length $l$. Denote the stationary probabilities as $\pi_l$. In this special case, the stationary probabilities (and consequently the average age and distortion) incurred by such strategies have closed-form expressions.

\begin{corollary}\label{corr:closed_form}The $(\Delta_e, D)$ curve attained by single-threshold strategies has a parametric description that is available in closed-form. Let $\bar{p} := 1-p$,  $S_j^{(0)} := (1+jp)$ and $S_j^{(n)} := \sum_{k=0}^j S_k^{(n-1)}$ for $n \geq 1$. Also let $S_j^{(n)} = 0$ for $j < 0$. For $0\leq j \leq \tau-1$, the stationary probabilities are given by 
\begin{equation}
\pi_{\tau-j} = \pi_{\tau + 1}\frac{1+\sum_{k=0}^{\lceil\tau/N\rceil}(-1)^{k+1}S_{j-kN}^{(k)}p^k\bar{p}^{(k+1)(N-1)}}{\bar{p}^{j+1}}
\end{equation}
	with 
\begin{equation}
\pi_{\tau + 1} = \bigg[\sum_{j=0}^{\tau-1}\frac{1+\sum_{k=0}^{\lceil\tau/N\rceil}(-1)^{k+1}S_{j-kN}^{(k)}p^k\bar{p}^{(k+1)(N-1)}}{\bar{p}^{j+1}} + \frac 1 p\bigg]^{-1}.
\end{equation}
and $\pi_{\tau + 1 + j} = \bar{p}^j\pi_{\tau + 1}$ for $j > 0$. The $(\Delta_e,D)$ curve therefore has a parametric description $(\Delta_e(\tau),D(\tau))$ given by 
\begin{equation}
\Delta_e(\tau) = \sum_{j=1}^{\tau-1} j\pi_{N+j} + \frac{\tau\pi_{\tau+1}\bar{p}^{N-1}}{p}.
\end{equation}
\centerline{and}
\begin{equation}
D(\tau)  = \frac{\mu_V \pi_{\tau+1}\bar{p}^{N}}{p^2}.
\end{equation}
\end{corollary}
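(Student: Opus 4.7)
The plan is to derive the invariant distribution of the Markov chain on buffer lengths and then evaluate the two ergodic averages. Set $f(l) := l - s(l) = \min((l-N)^+, \tau)$ for the number of bits retained after a speak. Under a single-threshold policy the buffer length evolves as $L_{i+1} = f(L_i) + Z_{i+1}$; since $Z$ is geometric with parameter $p$ and $f$ is bounded, the chain is unichain by Lemma \ref{lem:unichain}. Introducing the residual probabilities $\rho_k := \Pr_\pi(f(L) = k)$ and matching $\pi_l$ with the law of $f(L) + Z$ will yield the compact recurrences
\[
\pi_1 = p\rho_0, \qquad \pi_{l+1} = \bar{p}\pi_l + p\rho_l \text{ for } 1 \leq l \leq \tau, \qquad \pi_{l+1} = \bar{p}\pi_l \text{ for } l \geq \tau + 1.
\]
The third relation immediately delivers the geometric tail $\pi_{\tau+1+j} = \bar{p}^j \pi_{\tau+1}$ claimed in the corollary; summing it then gives $\rho_\tau = \pi_{\tau+1}\bar{p}^{N-1}/p$.

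Next I will invert the middle recurrence backwards from the reference state $\tau+1$,
\[
\pi_{\tau-j} = \pi_{\tau+1}\bar{p}^{-(j+1)} - p\sum_{k=0}^{j}\bar{p}^{k-j-1}\rho_{\tau-k}, \qquad 0 \leq j \leq \tau-1,
\]
and substitute the identifications $\rho_{\tau - k} = \pi_{N + \tau - k}$ for $1 \leq k \leq \tau - 1$. When $k \leq N - 1$ the index sits in the geometric tail and can be replaced by $\pi_{\tau+1}\bar{p}^{N-k-1}$; when $k \geq N$ it falls in the low regime and the same backward formula must be applied recursively. This creates a self-referential system that I will resolve by unfolding the substitutions layer by layer, each layer shifting the index by $N$ and contributing one extra factor of $-p\,\bar{p}^{N-1}$ coming from the combined effect of the recurrence and the geometric tail of $\rho_\tau$.

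I will then prove by induction on the depth $n \geq 0$ that after reorganization the $n$-th layer contributes exactly $(-1)^{n+1}S_{j-nN}^{(n)}p^n\bar{p}^{(n+1)(N-1)}\pi_{\tau+1}/\bar{p}^{j+1}$ to $\pi_{\tau-j}$. The base case $n=0$ collects the direct tail substitutions and uses the telescoping $\sum_k \bar{p}^{k-j-1}\bar{p}^{N-k-1}$ to produce the arithmetic-progression coefficient $S_j^{(0)} = 1 + jp$; the inductive step absorbs each new partial summation using the defining relation $S_j^{(n)} = \sum_{k=0}^{j} S_k^{(n-1)}$, with the convention $S_j^{(n)} = 0$ for $j < 0$ automatically terminating the unfolding after $\lceil \tau/N\rceil$ layers. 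The value of $\pi_{\tau+1}$ is then pinned down by $\sum_{l \geq 1}\pi_l = 1$: the tail sum equals $\pi_{\tau+1}/p$ and the low-regime sum is proportional to $\pi_{\tau+1}$ by the formula just obtained, yielding the displayed expression.

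Finally, ergodicity together with Theorem \ref{thm:simple_age_D} give $\Delta_e(\tau) = E_\pi[f(L)] = \sum_{k=1}^\tau k\rho_k$, and substituting $\rho_k = \pi_{N+k}$ together with $\rho_\tau = \pi_{\tau+1}\bar{p}^{N-1}/p$ produces the stated age formula. The distortion $D(\tau) = \mu_V E_\pi[(s(L) - N)^+]/\mu$ evaluates in closed form by plugging the geometric tail into $\sum_{l \geq N + \tau + 1}(l - \tau - N)\pi_l$, which reduces to a single $\sum_{m \geq 1} m \bar{p}^m = \bar{p}/p^2$ computation. The hard part will be the third step: reorganizing the unfolded recursion into the iterated sums $S_j^{(n)}$ and verifying by induction that the alternating signs, index shifts $j \mapsto j - N$, and accumulating prefactors all align with the claimed formula — a delicate piece of combinatorial bookkeeping but, once the pattern is isolated, essentially routine.
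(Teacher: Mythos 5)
Your plan mirrors the paper's Appendix G proof essentially step for step: you obtain the same recurrences $\pi_{j+1}=\bar{p}\pi_j+p\pi_{N+j}$ for $1\le j\le\tau-1$ (your $\rho_j$ equals $\pi_{N+j}$ in that range) and $\pi_{j+1}=\bar{p}\pi_j$ for $j\ge\tau+1$, you unfold them backwards from $\pi_{\tau+1}$ in layers of depth $N$, and you recognize the iterated sums $S_j^{(n)}$ with the same layer-by-layer bookkeeping. The paper is no more rigorous here than your sketch --- it ``observes the pattern'' after two layers rather than completing the induction --- so on the stationary distribution and the $\Delta_e(\tau)$ formula your route and the paper's coincide.

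There is, however, a substantive discrepancy in the $D(\tau)$ computation that you should not gloss over. You write $D(\tau)=\mu_V E_\pi[(s(L)-N)^+]/\mu$, which is the correct normalization: it is consistent with the $1/\mu$ prefactor in Theorem 1 and with the one-step cost $g(l,s)=\mu_V p(s-N)^+ +\eta(l-s)$ stated in Section VI (recall $1/\mu = p$ for geometric $Z$). Evaluating the geometric tail gives $E_\pi[(s(L)-N)^+]=\pi_{\tau+1}\bar{p}^N/p^2$, so your plan lands on $D(\tau)=\mu_V\pi_{\tau+1}\bar{p}^N/p$. The paper's appendix instead writes $D=\mu_V\sum_{j\ge1}j\pi_{\tau+N+j}$ --- dropping the $1/\mu$ --- and so reports $\mu_V\pi_{\tau+1}\bar{p}^N/p^2$, which is what appears in the corollary. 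A sanity check at $\tau=0$ (where $\pi_1=p$) shows the corollary's formula evaluates to $\mu_V\bar{p}^N/p$, contradicting the earlier statement in the same section that the always-send-freshest strategy achieves $D=\mu_V(1-p)^N$; your version recovers $\mu_V\bar{p}^N$ exactly. In short, your approach is the same as the paper's and appears to be the correct one, but it proves a $D(\tau)$ formula that differs from the printed corollary by a factor of $p$: the $p^2$ in the denominator should be $p$.
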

\begin{remark}
	If the bits sent are equal in terms of their importance, i.e., $v_1 = v_2$, then buffer ignorant strategies are optimal among the strategies without timestamp coding. This is because strategies as such already assume that $1$ and $0$ are of equal importance and there is no need to indicate which bit is more important.  Consequently, for the binary erasure channel, i.e., $N=1$; the only optimal buffer ignorant strategy is to send the last bit if the bits are equally important.
\end{remark}

Until now, we have only considered integer $N$. However, the erasure channel between the sender and receiver could admit $K$ inputs, where $K$ is not necessarily a power of $2$. This would imply a non-integer $N = \log K$. Studying such $N$ requires strategies with some knowledge of the buffer content and possibly requires some coding. We will elaborate on how to handle these cases in the next section. 

\subsection{Revealing Partial Buffer Content}
The previous section was devoted to buffer ignorant strategies. Now, we allow some knowledge of the buffer content to improve buffer ignorant strategies and also to cover the case of a non-integer $N$.

Let us first show the improvement by coding over the buffer ignorant strategies for an integer $N$. Take the single-threshold strategy $s(l)$ described in \eqref{eq:single_thresh} together with the threshold $\tau$. Consider the state $l > \tau + N$, where the dictated action is to keep $\tau$ bits for future and send $N$ of the remaining bits. Then $l-\tau-N$ bits will never be sent and hence the distortion penalty will be $\mu_V(l-\tau-N)$. We aim to show that the distortion penalty can be decreased with some coding while preserving or decreasing the age penalty.

To that end, consider an alternative way of describing the buffer ignorant strategy above: the sender sends the first $N$ bits of the sequence $x_{l-\tau},x_{l-\tau-1},\ldots,x_{1}$ regardless of the content. It is therefore reasonable to think that a parser with a dictionary of size $2^N$ which sends the identity of the first parsed word in the same sequence could result in an improvement. One could resort to some variable-to-fixed length source coding techniques, such as Tunstall coding. Tunstall coding is known to maximize the expected number of bits parsed among prefix-free and variable-to-fixed length dictionaries. Let $E[L_{\text{Tun}}]$ be the expected number of bits parsed. With Tunstall coding, the expected number of unsent bits will be $l-\tau-E[L_{\text{Tun}}]$ and since $E[L_{\text{Tun}}] \geq N$, the distortion cost decreases. Thus, Tunstall algorithm improves the buffer ignorant strategies. Also note that for non-integer $N$, Tunstall algorithm can be used to determine parsing methods.

\begin{figure}[h]
	\centering
\includegraphics[scale=0.8]{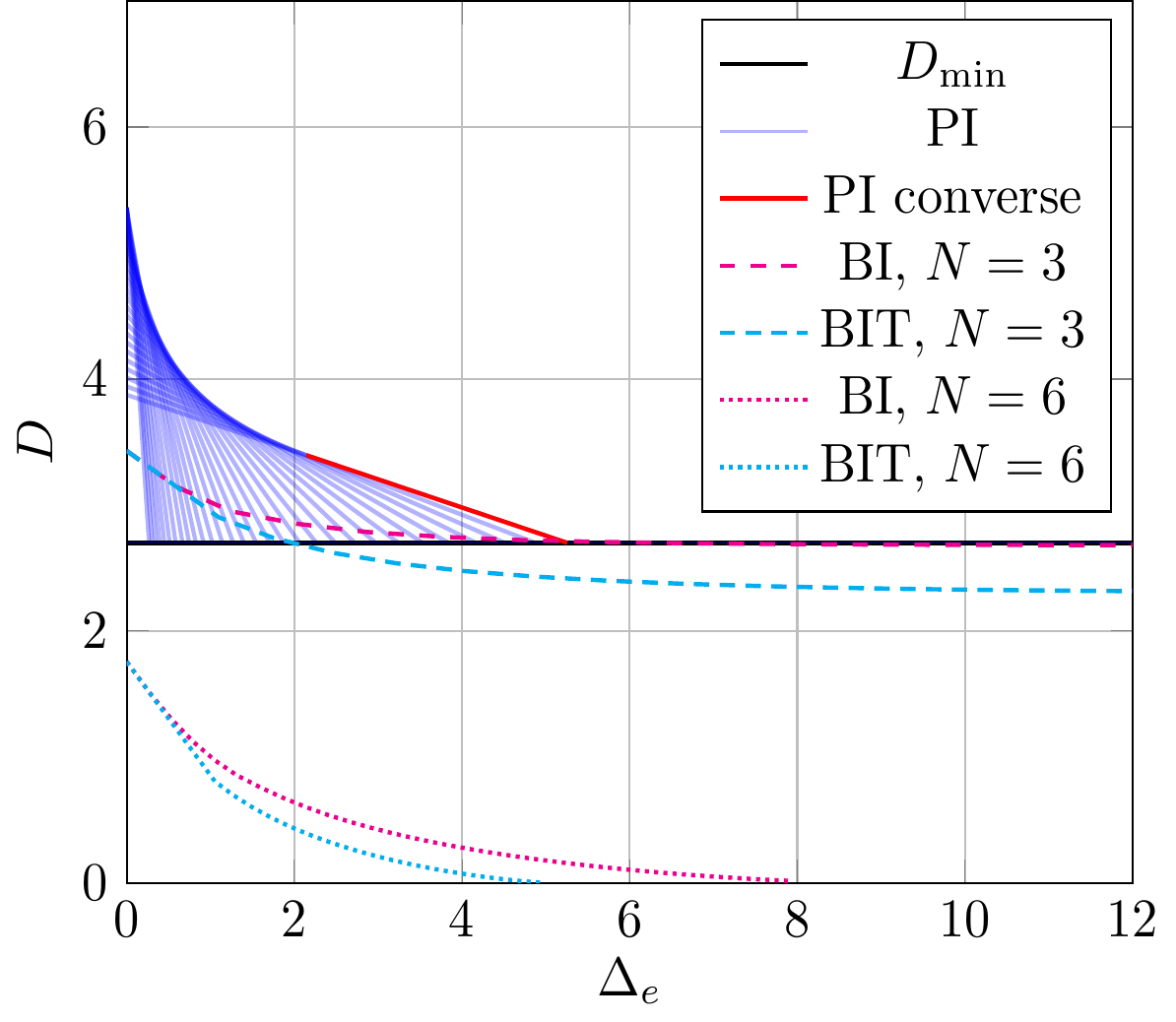}
\caption{For the same setting in Figure \ref{fig:geom1}, the curves pertaining to optimal buffer ignorant strategies (BI, colored in red) together with their improved versions with Tunstall coding (BIT, colored in cyan) are plotted for the same source in Figure \ref{fig:geom1}. The curves corresponding to $N = 3$ are dashed, and the curves corresponding to $N = 6$ are dotted.}
\label{fig:tunstall}
\end{figure}

We end this section by presenting some numerical results that illustrate the improvement with buffer ignorant strategies and Tunstall coding. We use the same source and interspeaking time distribution as in Figure 1, i.e., $\cV = \{1,20\}$ with $\Pr(V = 1) = 0.7$ and $\Pr(Z=1) = 0.2$. Recall that we were able to find the optimal policies with Algorithm \ref{alg:mod_policy_iter2} up to a buffer size of 17. This suggests that the sender describes the timing information with $\lceil\log17\rceil = 5$ bits and with an additional bit to describe the content, which implies that the left end of the PI converse line segment can be attained with sending $N = 6$ bits --- note that this converse bound is valid only for strategies of Section \ref{sec:age_dist_tradeoff}, and not for buffer ignorant strategies. As we suggested, the timing information can be sacrificed to allocate the whole budget for the data description in an attempt to improve the performance. This improvement is evident even for $N = 3$, where the optimal buffer ignorant strategy (BI, $N=3$) and its improved version with Tunstall coding (BIT, $N=3$) perform better than the optimal PI curve as seen in Figure \ref{fig:tunstall}. For $N =6$, there is drastical improvement and one can approach zero distortion with finite age --- see BI and BIT, $N=6$ in Figure \ref{fig:tunstall}.

\subsection{Discussion about other possible coding strategies}

The buffer ignorant strategies discussed in the previous section only depend on the buffer size and we have shown that these strategies can be improved by revealing some buffer content. One can argue that these coding strategies might be far from optimal, as they use partial knowledge. The sender could base its strategies on all past speaking times and the past buffer content. Calculating the tightest $(\Delta_e,D)$ curve corresponding to this broad class of strategies seems to be formidably complex. Even with the sole knowledge of the current buffer content, the problem becomes difficult. We illustrate this with an example. Suppose the current buffer content is $\bs{b} = [v,v,1,1,v,v,v,1,1,1,v,v,v,1]$ and the sender could only send $N=3$ bits. The single-threshold strategy with $\tau = 4$ will choose the index $s(\bs{b}) = 10$ and send $(1,1,1)$, which does not contain any important data. If the index $s = 13$ is chosen, $(v,v,v)$ will be sent and this might be a better strategy; however, the description of $s=13$ must be somehow included in the $N=3$ bits and all three important data might not be sent. Encoding both the data and their indices into a fixed number of bits complicates the possible actions and thus the problem appears to be very hard. We also observe an interesting tradeoff in this situation. In our MDP, sacrificing perfect state information results in a smaller policy space with policies of lower penalties, i.e., if $k$ bits are allocated for timestamp description, $N-k$ information bits can be sent and when $k$ decreases, the number of possible actions also decreases but their one step costs can possibly be smaller.

\section{Discussion}
In this work, we have studied a discrete-time model where the sender is only allowed to speak at time slots assigned by an external scheduler. In the absence of a distortion measure, it is clear that the optimal strategy is to send the freshest packet in the buffer at each speaking time as this will minimize the age. However, if this freshest packet has low importance, it may be beneficial to send a packet of higher importance instead, sacrificing freshness for lowering distortion. Hence, it is immediate that a tradeoff between the age and the distortion exists. It turns out that the optimal tradeoff can be attained with bounded buffer policies, and these policies can be found with numerical methods. Unfortunately, they turn out to be not simple-to-describe.

We observed that the usual policy iteration methods were inefficient for our specific problem, and we devised an algorithm based on appropriate data structures and problem-specific simplifications. The new algorithm performs significantly better --- the time complexity decreases from $O(|\cV|^{3K})$ to $O(K|\cV|^{K})$, where $K$ is the exact buffer size needed for attaining the optimal tradeoff. However, at the high-age regime, $K$ is large and in turn, the optimal tradeoff cannot be computed. One could try to find simple-to-describe policies that are not too far from the optimal tradeoff at this regime.

The main results of this work also apply when the process of importance levels $\{V_i\}_{i \geq 0}$ is an ergodic Markov chain --- one can verify the conditions (i) and (ii) with the same state and action spaces defined in Section \ref{sec:mdp_formulation}. Consequently, the necessary buffer size will be the same and one is able to find the optimal curve as in the i.i.d. case.

The problem we formulated in the first place turns out to be closely related to a problem of transmitting packets over an erasure channel with perfect feedback. The difference is that in our setting, erasure events are revealed before transmissions. Nevertheless, we have shown that for both problems, the optimal tradeoff is the same.

Until Section \ref{sec:timestamp}, we assumed that the timing information is contained in the header of a packet, which is much smaller in size compared to the payload. In Section \ref{sec:timestamp}, we studied the case where the packets need not contain perfect timing information, and consist of at most $N$ bits. As a consequence, if one decides to sacrifice age for lowering distortion, some additional information must be included in the $N$ bits in order to tell the receiver to which time the information bits pertain. Therefore, if the sender decides not to send the freshest data, it not only sacrifices age but may also decrease the amount of information bits to be sent. We have studied some simple policies, called buffer ignorant, where the sender ignores what is in the buffer and allocates all its $N$-bit budget for the information bits. When the timing information dominates the payload, buffer ignorant policies improve drastically over the optimal policies found in Section \ref{sec:age_dist_tradeoff}, which include the timing information. Later, we have shown that buffer ignorant policies can be further improved by revealing some buffer content and using variable-to-fixed length coding. However, it seems very challenging to find the optimal tradeoff when any strategy that sends $N$ bits at a time is allowed.

	\bibliographystyle{IEEEtran}
	\bibliography{Ref4}
	
		\appendix
		
		\subsection{Proofs of Theorems \ref{thm:simple_age_D} and \ref{thm:reverse_fatou}}\label{app:thm1,2}
		Define $W_j := T_j - S_j$ for the rest of the proof and denote `almost surely' by \emph{a.s.}. 
		
		\subsubsection{Proof of Theorem 1}
		We first prove a convergence result in Lemma \ref{lemma:martingale} below, from which the equation \eqref{eq:simple_age}  follows as a corollary.
			\begin{lemma}\label{lemma:martingale}$\frac 1 i \sum_{j=1}^i  W_j(Z_{j+1}-\mu) \to 0$ a.s. if $\sup_j E[ W_j^2]< \infty$.
		\end{lemma}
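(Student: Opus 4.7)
The plan is to recognize $M_i := \sum_{j=1}^i W_j(Z_{j+1}-\mu)$ as a square-integrable martingale and then invoke a martingale strong law. Let $\mathcal{F}_j$ denote the $\sigma$-algebra containing all randomness visible up to (and including) the $j^{\text{th}}$ speaking time: that is, $Z_1,\ldots,Z_j$, the payloads $X_1,\ldots,X_{T_j}$, and the selections $S_1,\ldots,S_j$. By construction, $W_j = T_j - S_j$ is $\mathcal{F}_j$-measurable, while the next interspeaking time $Z_{j+1}$ is independent of $\mathcal{F}_j$ (since $\{Z_i\}$ is i.i.d.\ and independent of $\{X_t\}$). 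Consequently,
\begin{equation}
E[W_j(Z_{j+1}-\mu)\mid \mathcal{F}_j] = W_j\bigl(E[Z_{j+1}]-\mu\bigr)=0,
\end{equation}
so $\{M_i\}_{i\geq 0}$ with $M_0 = 0$ is a martingale adapted to $\{\mathcal{F}_i\}$.

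Next, I would bound the increment variances. Using the same independence,
\begin{equation}
E[(M_i-M_{i-1})^2] = E\bigl[W_i^2\bigr]\,E\bigl[(Z_{i+1}-\mu)^2\bigr] \leq \sigma_Z^2\sup_j E[W_j^2],
\end{equation}
which is finite by the square-integrability of $Z_1$ and the hypothesis $\sup_j E[W_j^2]<\infty$. Hence $\sum_{j\geq 1} E[(M_j-M_{j-1})^2]/j^2 < \infty$.

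The conclusion then follows from the standard martingale strong law of large numbers (the discrete analogue of Kolmogorov's criterion for orthogonal increments): if a martingale has increment variances satisfying $\sum_j E[(\Delta M_j)^2]/j^2 < \infty$, then $M_i/i \to 0$ almost surely. Applying this gives exactly the claim $\frac{1}{i}\sum_{j=1}^i W_j(Z_{j+1}-\mu)\to 0$ a.s.

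The only subtle step is ensuring that $Z_{j+1}$ is genuinely independent of $W_j$ (and of every prior $W_k$), which requires carefully identifying the filtration; this is where the problem setup --- $\{Z_i\}$ i.i.d.\ and independent of the payload process, and $S_j$ chosen based only on $\bs{V}^{T_j}$ and past actions --- does all the heavy lifting. Once that independence is in place, the martingale structure and the variance bound make the argument essentially mechanical.
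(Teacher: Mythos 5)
Your proof is correct and is essentially the paper's argument: the paper builds the same martingale (in the scaled form $\sum_{i}W_{i-1}(Z_i-\mu)/(i-1)$, with respect to $\mathcal{F}_n=\sigma(Z_1,\ldots,Z_n,\bs{X}_1^{T_n})$, which agrees with your filtration since the $S_i$ are measurable with respect to the payloads), bounds $E[M_n^2]$ by the same uncorrelated-increments calculation, and then invokes martingale $L^2$-boundedness plus Kronecker's lemma --- which is precisely the proof of the ``martingale strong law'' you cite as a packaged theorem.
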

		
		\begin{proof}
			 We use the result that if $\sum_{i} b_i/i$ converges, then $\frac 1 {i} \sum_{j \leq i} b_j \to 0$ --- known as Kronecker's Lemma \cite{Martingales}. Therefore, it is sufficient to show $\sum_i  W_i(Z_{i+1}-\mu)/i$ converges \emph{a.s.}. We now show that $M_n := \sum_{i=2}^n  W_{i-1}(Z_{i}- \mu)/{(i-1)}$, $M_1 := 0$ is a martingale with respect to the filtration $\cF_n := \sigma(Z_1,\ldots,Z_n,\bs{X}_1^{T_n})$. 
			
			Observe that $E[M_n|\cF_{n-1}] = M_{n-1} + E[ W_{n-1}(Z_{n}-\mu)|\cF_{n-1}]/(n-1) = M_{n-1}$ as $ W_{n-1}$ is $\cF_{n-1}$-measurable and $Z_{n}$ is independent of $\cF_{n-1}$ with $E[Z_{n}] =  \mu$. Since $M_n$ consists of uncorrelated increments, one can write
			\begin{equation}
			E[M_n^2] = \sum_{i=2}^n \frac{E[ W_{i-1}^2]\text{Var}(Z)}{(i-1)^2}.
			\end{equation}			
			Note that we assumed $E[Z^2] < \infty$, hence $\text{Var}(Z) < \infty$. Moreover, since $\sup_j E[ W_j^2]< \infty$, $\sum_{i} \frac{E[W_{i}^2]}{i^2} < \infty$. As a consequence, $\sup_n E[M_n^2] < \infty$ and the \emph{a.s.} convergence of $M_n$ follows from the martingale convergence theorem.
		\end{proof}
		
		Now we prove \eqref{eq:simple_D} and complete the proof of Theorem \ref{thm:simple_age_D}. Define
		\begin{equation}Y_j = 
		\begin{cases}X_j, &  j \in \bs{S}\\
		\?, &\text{else.}
		\end{cases}
		\end{equation}
		Given $t$, define $i = i(t) := \sup\{j \geq 0 : T_j \leq t\}$. Observe that for $j \leq i$, $Y_j(t) = Y_j$. Thus,
			\begin{equation}
			D_t^{(\bs{S})} = \frac 1 t \sum_{j\leq S_i} d(X_j,Y_j) + \frac 1 t \sum_{j = S_i+1}^t d(X_j,Y_j(t)).
			\end{equation}
			 Upper bound $D_t^{(\bs{S})}$ as
			\begin{equation}
			\begin{split}
			D_t^{(\bs{S})} &\leq \frac 1 t \sum_{j\leq S_i} d(X_j,Y_j) + \frac 1 t (T_{i+1}-S_i) v_{\max}\\
			&\leq \frac 1 {T_i} \sum_{j\leq S_i} d(X_j,Y_j) + \frac 1 t (Z_{i+1} + W_i) v_{\max}.
			\end{split}    
			\end{equation}
			Since $\sup_j E[W_j^2] < \infty$, $W_i$ is \emph{a.s.} finite for all $i$ and hence $Z_{i+1} + W_i$ is \emph{a.s.} finite. Thus $\frac 1 t (Z_{i+1} + W_i) v_{\max} \to 0$ \emph{a.s.} Then, we obtain
			\begin{equation}\limsup_{t \to \infty}{D_t^{(\bs{S})}} \leq \limsup_{i \to \infty} \frac 1 {T_i} \sum_{j\leq S_i} d(X_j,Y_j).\end{equation}
			Now, we lower bound $D_t^{(\bs{S})}$ as 
			
			\begin{equation}
			\begin{split}
			D_t^{(\bs{S})} \geq \frac 1 t \sum_{j\leq S_i} d(X_j,Y_j) &\geq \frac 1 {T_{i+1}} \sum_{j\leq S_i} d(X_j,Y_j)= \frac 1 {T_{i} + Z_{i+1}} \sum_{j\leq S_i} d(X_j,Y_j)
			\end{split}
			\end{equation}
			and take $\limsup$ on both sides to obtain
			\begin{equation}
			\limsup_{t \to \infty}{D_t^{(\bs{S})}} \geq \limsup_{i \to \infty} \frac 1 {T_{i} + Z_{i+1}} \sum_{j\leq S_i} d(X_j,Y_j).
			\end{equation}
			Finally, observe that $\frac {T_i} i \to \mu$ and $\frac {Z_{i+1}} i \to 0$ \emph{a.s.} Hence,
			\begin{equation}
			\begin{split}
			\limsup_{t \to \infty}{D_t^{(\bs{S})}} & = \frac 1 \mu \limsup_{i \to \infty} \frac 1 i \sum_{j\leq S_i} d(X_j,Y_j)\\
			& = \frac 1 \mu \limsup_{i \to \infty} \frac 1 i \sum_{j=1}^i D(\bs{V}^{T_j},S_j,S_{j-1}),
			\end{split}
			\end{equation}
			which ends the proof of \thmref{thm:simple_age_D}.
	\subsubsection{Proof of Theorem \ref{thm:reverse_fatou}}
Since $\sup_j E[W_j^2] < \infty$, it follows that 
	$\sup_{i} E[(\frac 1 i \sum_{j=1}^i W_j)^2] < \infty$. Thus, the family $(\frac 1 i \sum_{j=1}^i W_j)_{i \in \mathbb{N}}$ is uniformly integrable \cite[Chapter 13]{Martingales}. One can then use the reverse Fatou's lemma for uniformly integrable families \cite{Resnick} to obtain
	\begin{equation}
	\Delta_e^{(\bs{S})} = E\bigg[\limsup_{i \to \infty} \frac 1 i \sum_{j=1}^i W_j\bigg] \geq \limsup_{i \to \infty}  E\bigg[\frac 1 i \sum_{j=1}^i W_j \bigg].
	\end{equation}
	A similar reasoning for $D^{(\bs{S})}$ follows from the fact that each $D(\bs{V}^{T_j},S_j,S_{j-1})$ is smaller than $v_{\max}(W_{j-1} + Z_j)$. Therefore, $\{D(\bs{V}^{T_i},S_i,S_{i-1})\}_{i \geq 0}$ is a uniformly integrable family, and one proceeds in a similar way as above to obtain
	\begin{equation}
	D^{(\bs{S})} \geq \limsup_{i \to \infty}  \frac 1 \mu E\bigg[\frac 1 i \sum_{j=1}^i D(\bs{V}^{T_j},S_j,S_{j-1})\bigg].
	\end{equation}
	As $\limsup_{n} a_n +\limsup_{n} b_n \geq \limsup_n (a_n+b_n)$, the inequality
	$
	J(\eta)^{(\bs{S})} \leq D^{(\bs{S})} + \eta \Delta_e^{(\bs{S})}
	$
	holds. This completes the proof.
	
	\subsection{Proof of Lemma 1}\label{app:min_importance}
		We construct $\{\tilde S_i\}$ iteratively as follows: 
		\begin{equation}
		\tilde S_i = \begin{cases}
		S_i, &i<i_0\\
		\min\{s > S_{i_0}: V_s > \vmin\}\wedge T_{i_0},& i = i_0\\
			\tilde S_{i-1} + 1, & S_i \leq \tilde S_{i-1},\ i > i_0\\
		S_i, & S_i > \tilde S_{i-1},\ i > i_0
		\end{cases}.
		\end{equation}
		Verbally, at time $i_0$, $\tilde\bS$ takes the next packet whose importance value is more than $\vmin$ (if no such packet, selects the freshest one) and then selects consecutive packets irrespective of their importance values while it cannot choose anything that $\bS$ chooses. If $\tilde \bS$ can choose the packet that $\bS$ chooses at some time instant after $i_0$, it does so forever. Since $\tilde S_j \geq S_j$ for all $j$, we have for all $i$
		\begin{equation}
		\frac 1 i \sum_{j = 1}^i (T_j - \tilde S_j) \leq 	\frac 1 i \sum_{j = 1}^i (T_j -  S_j),
		\end{equation}
		and consequently $\Delta_e^{(\tilde \bS)} \leq \Delta_e^{(\bS)}$. For $i > i_0$, if $S_i = \tilde S_i$, then 
		\begin{equation}
		\frac 1 i \sum_{j=1}^i D(\bs{V}^{T_j},\tilde S_{j-1},\tilde S_{j}) \leq 	 \frac 1 i \sum_{j=1}^i D(\bs{V}^{T_j}, S_{j-1}, S_{j}).
		\end{equation}
		This is because $V_s \leq V_{\tilde s}$ for every $s \in \{S_1,\ldots, S_{i}\}$ and $\tilde s \in \{\tilde S_1,\ldots, \tilde S_{i}\}$. If $S_i < \tilde S_i$ at $i > i_0$, then $\tilde S$ must have skipped at most $\tilde S_{i_0} - S_{i_0}$ packets with minimum importance. Then we have 
		\begin{equation}
		\begin{split}
		&\frac 1 i \sum_{j=1}^i D(\bs{V}^{T_j},\tilde S_{j-1},\tilde S_{j})\leq 	 \frac 1 i \sum_{j=1}^i D(\bs{V}^{T_j}, S_{j-1}, S_{j})
		+ \frac 1 i \vmin (\tilde S_{i_0} - S_{i_0})\leq \frac 1 i \sum_{j=1}^i D(\bs{V}^{T_j}, S_{j-1}, S_{j})
		+ \frac 1 i \vmin (T_{i_0} - S_{i_0}).
		\end{split}
		\end{equation}
		Since $T_{i_0} - S_{i_0}$ is almost surely finite, $\frac 1 i (T_{i_0} - S_{i_0}) \to 0$. Therefore
		\begin{equation}
		\limsup_{i \to \infty} \frac 1 i \sum_{j=1}^i D(\bs{V}^{T_j},\tilde S_{j-1},\tilde S_{j})
		\leq 	\limsup_{i \to \infty} \frac 1 i \sum_{j=1}^i D(\bs{V}^{T_j}, S_{j-1}, S_{j})
		\end{equation}
		and consequently $D^{(\tilde \bS)} \leq D^{(\bS)}$.

		\subsection{Proof of Property \ref{prop:opt_soln}}\label{pf:prop1}
		In the following proofs, $g(\bs{b},s)$ refers to the one step costs of the MDP and $E_{\bs{b},s}[h(\bs{B}')]$ refers to the expected relative value of the next state accessed under the action $s$, i.e., $g(\bs{b},s) = \sum_{k=1}^{s-1}b_k + \eta(l(\bs{b})-s)$ and $E_{\bs{b},s}[h(\bs{B}')] = E[h(\bs{b}_{\geq s+1} \|\bs{V}^Z)]$.
			\begin{itemize}
%
%
				\item[(i)] Let $\bs{q} := \bs{b} \| \bs{b}'$. Suppose $s^*(\bs{q}) \neq l(\bs{b}) + s^*(\bs{b}')$ and $s^*(\bs{q}) > l(\bs{b})$. Then by the optimality condition, $s^*(\bs{q}) = \arg\min_{s \leq l(\bs{q})}  g(\bs{q},s) + E_{\bs{q},s}[h(\bs{B}')] =  l(\bs{b})+ \arg\min_{s \leq l(\bs{b}')}   g(\bs{b}',s) + E_{\bs{b}',s}[h(\bs{B}')] = l(\bs{b}) + s^*(\bs{b}')$, which contradicts the statement. Therefore, if $ s^*(\bs{q}) > l(\bs{b})$, then $s^*(\bs{q}) = l(\bs{b}) + s^*(\bs{b}')$.

				\item[(ii)] We use $\bs{u}$ instead of $\bs{b}'$ for notational convenience and let $\bs{q} := \bs{b}\|\bs{u}$. The second upper bound is easy to show as the policy is restricted to $s > l(\bs{b})$.
				To prove $h^*(\bs{u}) \leq h^*(\bs{q})$, we will find a lower bound for $h^*(\bs{q})-h^*(\bs{u})$ and show that it is non-negative.
				Observe that
				\begin{equation}\label{eq:acc_cost}
				\begin{split}
				h^*(\bs{q})-h^*(\bs{u})&= h^*(\bs{q}) - \min_s E_{\bs{u},s}[g(s,\bs{u})+ h^*(\bs{U}') -\lambda^*]\\
				&\geq h^*(\bs{q}) - E_{\bs{u},s}[g(s,\bs{u})+ h^*(\bs{U}') -\lambda^*]\\
				&= E_{\bs{q},s^*(\bs{q})}[g(s,\bs{q})+ h^*(\bs{Q}') ] - E_{\bs{u},s}[g(s,\bs{u})+ h^*(\bs{U}') ]
				\end{split}
				\end{equation}
				for any $s$.
				Therefore, changing the actions that govern the Markov chain $\{\bs{U}_i\}$ leads to further lower bounds.
				Now, we will use a similar coupling idea as done in the proof of Lemma \ref{lem:min_importance}. Consider the two Markov chains above where the first one starts from $\bs{q}$ and the other from $\bs{u}$. Assume the two processes are coupled with having the same future arrivals for the consequent buffer states. Moreover, the former chain is controlled with its respective optimal policy $\bs{s}^*$, whereas the latter is controlled as follows: If possible, choose the data chosen by the first process; otherwise choose the oldest data possible.  Denote this policy as $\tilde{\bs{s}}$ and let $\{\bs{Q}_i\}$ be the Markov chain pertaining to the former process. Then these two processes will follow the paths
				\begin{equation}\begin{split}
				\bs{q} = \bs{Q}_0 \stackrel{s^*(\bs{Q}_0)}{\to} \bs{Q}_1 \stackrel{s^*(\bs{Q}_1)}{\to} \bs{Q}_2 \to \ldots \to \bs{Q}_{\tau}\\
				\bs{u} =\bs{U}_0 \stackrel{\tilde s(\bs{U}_0)}{\to} \bs{U}_1 \stackrel{\tilde s(\bs{U}_1)}{\to} \bs{U}_2 \to \ldots \to \bs{U}_{\tau}
				\end{split}
				\end{equation}
				and eventually end in the same state $\bs{Q}_\tau = \bs{U}_\tau$ after a random time $\tau$. This is because all possible policies induce the same recurrent class as mentioned in Lemma \ref{lem:unichain}. Replacing $h^*(\bs{Q}') $, $h^*(\bs{U}') $ in \eqref{eq:acc_cost} several times, we obtain
				\begin{equation}\label{eq:acc_cost2}
				h^*(\bs{q})-h^*(\bs{u}) \geq E\bigg[\sum_{i = 0}^\tau g(\bs{Q}_i,s^*(\bs{Q}_i))-g(\bs{U}_i,\tilde s(\bs{U}_i))\bigg].
				\end{equation}
				Observe that the right-hand side is equal to the expectation of the difference of accumulated costs incurred by the two processes until they reach the same state. Note that $\bs{U}_i$ is a suffix of $\bs{Q}_i$ for all $i$. Therefore, the age penalty of the former process will be greater. Furthermore, the latter process chooses every data that can be chosen by the former one, except the ones in the first portion $\bs{b}$, whose miss do not contribute to the distortion penalty of the latter process. Since both accumulated age and distortion penalties of the former process $\{\bs{Q}_i\}$ cannot be smaller than those of $\{\bs{U}_i\}$, the expectation in \eqref{eq:acc_cost2} is non-negative. Hence the proof is complete.
		\end{itemize}

\subsection{Proof of Theorem \ref{thm:suff_buffer_size}}\label{app:suff_buffer_size} 

For simplicity, we prove the case with $|\cV| = 3$ and we set $v_{\min} = 1$ without loss of generality. The same proof technique may be extended to larger $|\cV|$, e.g., with induction.

Assume the policy iteration algorithm is executed with a sufficiently large buffer size $M$.
We first show that packets of importance $v_2$ are never chosen by the optimal policy if choosing them incurs an age penalty greater than $K_2 := \lceil\frac{v_2-1}{\eta \mu}\rceil$, i.e., they must not be generated more than $K_2$ time slots ago. We use the same coupling idea of two controlled Markov chains with different policies, as we have done in the proofs of Lemma \ref{lem:min_importance} and Property \ref{prop:opt_soln}(ii).

Take a state $\bs{b} \in \cV^{\leq M }$ with length greater than $K_2$, with $b_1 = v_2$ (this equality is without loss of generality) and suppose $s^*(\bs{b}) = 1$. We will show that there is a better strategy than $s^*(\bs{b}) = 1$, which contradicts the optimality of $s^*$. Consider now $\tilde{s}(\bs{b}) = \min\{k > 1: b_k > 1\}$, i.e, the index of first non-1 data (whose importance is greater than 1) in the remaining buffer. Optimality of $s^*$ requires the following inequality to hold (the argument $\bs{b}$ of $\tilde s(\bs{b})$ is omitted), otherwise the policy iteration would not have converged.
\begin{equation}\label{eqn:v2}
E\big[\eta(\tilde s - 1) + h^*(b_2^l \|\bs{V}^Z) - h^*(b_{\tilde s + 1}^l \|\bs{V}^Z)\big]  -\frac 1 \mu v_2 \leq 0
\end{equation}

Now, consider two coupled processes $\{\bs{Q}_i\}$, $\{\bs{U}_i\}$ with $\bs{Q}_0 := (b_2^l \|\bs{V}^Z)$, $\bs{U}_0:= (b_{\tilde s + 1}^l \| \bs{V}^Z)$ where the actions of $\{\bs{U}_i\}$ are modified. The modification will be the same as in the proof of Property \ref{prop:opt_soln}(ii): Choose the packet chosen by the first process if possible; otherwise choose the oldest possible. As shown in the proof of Property 1(ii), altering the policy increases the expected relative value. Consequently, we obtain a lower bound to the expectation in \eqref{eqn:v2}. Again, as every policy has the same recurrent class, the two processes coincide with probability one. When they coincide, one of the following occurs:
\begin{itemize}
	\item[(i)]  $\{\bs{Q}_i\}$ misses a non-1 data that is taken by $\{\bs{U}_i\}$. Then the accumulated cost is greater than $\frac 1 \mu v_2$.
	\item[(ii)] $\{\bs{U}_i\}$ takes a packet of importance 1 at the end. Then the accumulated penalty incurred by age will be greater than $\eta l(\bs{b})$ (using the fact that $\{\bs{U}_i\}$ remains a suffix of $\{\bs{Q}_i\}$ for all $i$) and the accumulated cost from distortion will be greater than $\frac 1 \mu$ (since $\{\bs{U}_i\}$ takes an extra packet of importance 1). The total cost will be greater than $\eta l(\bs{b}) + \frac 1 \mu \geq \eta K_2(\eta)+ \frac 1 \mu \geq \frac 1 \mu v_2$.
\end{itemize}

Therefore, we conclude that $\frac 1 \mu v_2$ is a lower bound to the expectation in \eqref{eqn:v2}. Hence the left-hand side of \eqref{eqn:v2} can never be negative and $s^*$ cannot be optimal --- if it is equal to zero, then $s^*$ and $\tilde s$ are indifferent, so one can drive the process with $\tilde s$.

Now we proceed in a similar fashion to prove the case for $v_3$. Consider a state $\bs{b}$ with $l(\bs{b}) > K_3 :=  \lceil\frac{v_3-1}{\eta \mu}\rceil$, $b_1 = v_3$ and suppose $s^*(\bs{b}) = 1$. Choose $\tilde{s}(\bs{b}) = \min\{k > 1: b_k > 1,\ l(\bs{b})-k < K_{2} \text{ if }b_k = v_2\}$, i.e., take the first non-1 data but with the constraint that if it has importance $v_2$, it must be generated within the most recent $K_2$ time slots.

Define $\{\bs{Q}_i\}$ and $\{\bs{U}_i\}$ in a similar fashion. The modification done to the actions on $\{\bs{U}_i\}$ will have a minor difference compared to the previous case. Again, if $\bs{U}_i$ cannot choose a data that is chosen by $\bs{Q}_i$, it chooses the first non-1 data but with the following extra condition: If its importance is $v_2$, take it if has been generated less than $K_2$ time slots ago; otherwise skip it and do the same for the next non-1 data. This modification ensures that $\bs{U}_i$ will choose every possible data that may be chosen by $\bs{Q}_i$. When the two processes coincide, one of the following occurs:
\begin{itemize}
	\item[(i)]  $\{\bs{Q}_i\}$ misses $v_3$ that is taken by $\{\bs{U}_i\}$. Then the accumulated cost is greater than $\frac 1 \mu v_3$.
	\item[(ii)]  $\{\bs{Q}_i\}$ misses $v_2$ that is taken by $\{\bs{U}_i\}$. Then the accumulated cost incurred by age will be greater than $\eta (l(\bs{b})-K_2)$ (as $\{\bs{U}_i\}$ has remained a suffix of $\{\bs{Q}_i\}$ and has taken this $v_2$, whose index must be greater than $l(\bs{b}) - K_2$). The accumulated cost from distortion will be greater than $\frac 1 \mu v_2$. Consequently, the total cost will be greater than $\eta (l(\bs{b})-K_2) + \frac 1 \mu v_2 \geq \frac{v_3-1}{\mu} - \frac{v_2-1}{\mu} + \frac 1 \mu v_2 = \frac 1 \mu v_3$.
	\item[(iii)] $\{\bs{U}_i\}$ takes a 1 at the end. Then the accumulated cost incurred by age will be greater than $\eta l(\bs{b})$ (again, using the fact that $\{\bs{U}_i\}$ remains a suffix of $\{\bs{Q}_i\}$) and the accumulated cost from distortion will be greater than $\frac 1 \mu$ (since $\{\bs{U}_i\}$ takes an extra 1). The total cost will be greater than $\eta l(\bs{b}) + \frac 1 \mu \geq \frac 1 \mu v_3$.
\end{itemize}

Similar to the case of $v_2$, the expectation \eqref{eqn:v2} can never be negative and hence $s^*$ cannot be optimal.

All in all, we have shown that starting the policy iteration algorithm with a buffer size $M > K_3$, the algorithm terminates with an optimal policy that does not use more than a buffer size of $K_3$. This implies that the solution of the Bellman equation lies within buffers of size at most $K_3$.

\subsection{Policy Update Improvement}\label{app:policy_improvement}
As discussed before, the policy update stage of Algorithm \ref{alg:mod_policy_iter2} can be improved to run in $O(K|\cV|^K)$ time. Our aim is to improve the calculation of $C_{\bs{h}^{(i)}}(\bs{b},1)$ at step 16 of Algorithm \ref{alg:mod_policy_iter2}. Recall that $p_z := \Pr(Z=z)$, and $q_z := \Pr(Z\geq z)$. We omit the argument of $l(\bs{b})$ for brevity. Let us rewrite $C_{\bs{h}^{(i)}}(\bs{b},1)$ by first conditioning on $Z=z$ as
\begin{align}
C_{\bs{h}^{(i)}}(\bs{b},1) &= \eta(l -1) + \sum_{z = 1}^{\infty}p_z E[h^{(i)}(\bs{b}_{\geq 2}\| \bs{V}^z)]\notag\\
&=\eta(l -1) + \sum_{z = 1}^{K-l} p_z E[h^{(i)}(\bs{b}_{\geq 2}\| \bs{V}^z)]+  p_{K-l+1} E[h^{(i)}(\bs{b}_{\geq 2}\| \bs{V}^{K-l+1})]\label{eq:K_cost1}\\
& \phantom{=}+  \sum_{z = K-l+2}^{\infty} p_z E[h^{(i)}(\bs{b}_{\geq 2}\| \bs{V}^z)]\label{eq:K_cost2}
\end{align}
and note that for $z > K-l+1$, $h^{(i)}(\bs{b}_{\geq 2}\| \bs{V}^z) = b_2/\mu + h^{(i)}(\bs{b}_{\geq 3}\| \bs{V}^z)$ since the length of $\bs{b}_{\geq 2}\| \bs{V}^z$ exceeds $K$. In line with this result, we rewrite the summation of \eqref{eq:K_cost1} and \eqref{eq:K_cost2} as 
\begin{align}
\kappa^{(i)}(\bs{b}_{\geq 2},K) &:= p_{K-l+1} E[h^{(i)}(\bs{b}_{\geq 2}\| \bs{V}^{K-l+1})]
+ q_{K-l+2}b_2/\mu + \sum_{z = K-l+2}^{\infty} p_z E[h^{(i)}(\bs{b}_{\geq 3}\| \bs{V}^z)]\notag\\
&= p_{K-l(\bs{b}_{\geq 2})} E[h^{(i)}(\bs{b}_{\geq 2}\| \bs{V}^{K-l(\bs{b}_{\geq 2})})]
+ q_{K-l(\bs{b}_{\geq 2})+1}b_2/\mu + \sum_{z = K-l(\bs{b}_{\geq 2})+1}^{\infty} p_z E[h^{(i)}(\bs{b}_{\geq 3}\| \bs{V}^z)]
\end{align}
The key observation here is that the last term is equal to $\kappa^{(i)}(\bs{b}_{\geq 3},K) = \kappa^{(i)}(\parent(\bs{b}_{\geq 2}),K)$. Therefore, we have the recursive relation 
\begin{equation}\label{eq:K_recursive}
\kappa^{(i)}(\bs{b},K) := p_{K-l(\bs{b})} E[h^{(i)}(\bs{b}\| \bs{V}^{K-l(\bs{b})})]\\
+ q_{K-l(\bs{b})+1}b_1/\mu + \kappa^{(i)}(\parent(\bs{b}),K)
\end{equation}
with the initial condition 
\begin{equation}
\kappa^{(i)}(\delta,K) := q_KE[h^{(i)}(\bs{V}^{K})] + \frac {E[V]} \mu E[(Z-K)^+].
\end{equation}
The recursive relation \eqref{eq:K_recursive} can be implemented exactly the same as the other updates that take place in lines 17-26 of Algorithm \ref{alg:mod_policy_iter2}. Finally, we have 
\begin{equation}
C_{\bs{h}^{(i)}}(\bs{b},1) =\eta(l -1) + \sum_{z = 1}^{K-l} p_z E[h^{(i)}(\bs{b}_{\geq 2}\| \bs{V}^z)] + \kappa^{(i)}(\parent(\bs{b}),K).
\end{equation}
Knowing $\kappa^{(i)}(\parent(\bs{b}),K)$, both $C_{\bs{h}^{(i)}}(\bs{b},1) $ and $\kappa^{(i)}(\bs{b},K)$ can be calculated in $O(|\cV|^{K-l})$ steps. Consequently, the total amount of calculations done for all length-$l$ states is $O(|\cV|^{K})$, and consequently for the depth-$K$ tree $\cT$, the amount of calculations done is $O(K|\cV|^{K})$.

\subsection{Closed-Form Expressions for the Simple Strategies}\label{app:simple}
\subsubsection{S1 --- Send the oldest important data among the most recent $K$}
For a buffer $\bs{b}$, let $s$ be the index of the first important data among the $K$ most recently arrived packets. Let $a = (K \wedge l(\bs{b}))-s+1$; and if there is no important data among the $K$ most recent packets, set $a = 0$. Let $A_i$ be
the value of $a$ at time instant $i$.  It can be verified that the process $\{A_i\}$ is a Markov chain with state space $\{0,1,\ldots,K\}$. Let $p_{a,a'}$ denote the transition probability from state $a$ to $a'$; and $\pi_a$ denote the stationary probability of state $a$. Recall that $Z$ is a geometric random variable and denote $p := \Pr(Z = 1)$ and $q := \Pr(V = v_{2})$. Let $\bar x := 1-x$. 

Now, let us calculate $p_{a, a'}$.
Observe that when we are at state $a$, the first element in the buffer is selected and the remaining $a-1$ elements are untouched and need not be known. Conditioned on the next speaking time being $a'-a <  z \leq K-a$, the probability that the next state being $a'$ is equal to $\bar q^{z-(a'-a)-1} q$. This is because the first $z-(a'-a)-1$ elements must be $v_{1}$ and the next should be $v_{2}$. For $z > K-a$, since we only check the most recent $K$ data in the buffer, the first $K-a'$ must be $v_{1}$ and the next one should be $v_{2}$. Thus we obtain the probability as $\bar q^{K-a'}q$. Averaging over $z$, we have
\begin{equation}
p_{a, a'} = \sum_{z=a'-a+1}^\infty \bar p^{z-1}p \bar q^{(z-a'+a-1) \wedge (K-a')}q.
\end{equation}
With a similar reasoning, we calculate $p_{a', a}$ as 
\begin{equation}
p_{a',a} = \sum_{z=1}^\infty \bar p^{z-1}p \bar q^{(z+a'-a-1) \wedge (K-a)}q = \sum_{z=a'-a+1}^\infty \bar p^{z-a'+a-1}p \bar q^{(z-1) \wedge (K-a)}q= \bigg(\frac{\bar q }{\bar p}\bigg)^{a'-a}p_{a,a'}.
\end{equation}
It is easy to see that $p_{0,a} = p_{1,a}$ as the buffer is emptied in both cases; and $p_{a,0} = p_{a,1} \tfrac {\bar q} q$ as the only difference between the transition from state $a$ to state 0, or to state 1 is the most recent element being $v_{1}$ or $v_{2}$.

We now claim that the chain is reversible. The claim is easily verified by noting that the distribution $(\pi_a:a=0,\dots,K)$ described by
$$
\pi_0=(\bar q/q)\pi_1,\ \pi_a = (\bar q/\bar p)^{K-a}\pi_K,\ a=1,\dots,K-1
$$
\centerline{with}
\begin{equation}
\pi_K = \big(1-\tfrac{\bar q}{\bar p}\big) / \big(1-\tfrac {p}{q}\big(\tfrac{\bar q }{\bar p}\big)^K\big)
\end{equation}
satisfies the detailed balance equations $\pi_a p_{a,a'}=\pi_{a'}p_{a,a'}$.


The average excess age is then calculated straightforwardly:
\begin{equation}
\Delta_e^{(S1)} (K) = \sum_{k = 1}^K (k-1)\pi_k \label{eq:age_s1}= \frac{K-1}{\big(1-\tfrac {p}{q}\big(\tfrac{\bar q }{\bar p}\big)^K\big)} - \frac{\tfrac{\bar q}{\bar p}\big(1-\big(\tfrac{\bar q}{\bar p}\big)^K\big)}{\big(1-\tfrac {p}{q}\big(\tfrac{\bar q }{\bar p}\big)^K\big)\big(1-\frac{\bar q}{\bar p}\big)}.
\end{equation}
For $q = p$, one takes $\frac{\bar q}{\bar p} \to 1$ in \eqref{eq:age_s1} to obtain $\Delta_e^{(S1)}  = \frac{(K-1)K}{2(K + \frac{\bar{p}}{p})}$.

The average distortion is calculated as follows: Unimportant packets are sent $\pi_0$ fraction of the speaking times, and the remaining time is allocated to the transmission of important packets. Hence, unimportant packets are sent $p\pi_0$ fraction of the time and important packets are sent  $p(1-\pi_0)$ fraction of the time. Consequently, 
\begin{equation}\label{eq:dist_s1}
D^{(S1)}(K) = (\bar q - p\pi_0)v_{1} + (q - p(1-\pi_0))v_
{2}.
\end{equation}

\subsubsection{S2 --- Send the newest important data among the most recent $K$}
Compared to the other strategies, the analysis will be relatively simpler. For a buffer $\bs{b}$, let $s$ be the index of the newest important data among the $K$ most recent and let the state be $a = l-s+1$. If there is no important data among the $K$ most recent, set $a = 0$. Observe that regardless of the value of $a$, the packets (if any) that remain the buffer after transmission are of minimum importance and will be ignored at the next speaking time. Hence, at the next speaking time, the next state $a'$ will not depend on the current state $a$, and we have an i.i.d.\ process. Let $\pi_a$ be the probability of the next state being equal to $a$. Conditioned on the next speaking time $z$, this probability is equal to $q\bar q^{a-1}\indic\{a \leq z\}$. Hence, for $0< a \leq K$
\begin{equation}
\pi_a = \sum_{z \geq a}^\infty q\bar q^{a-1}\bar p^{z-1}p = q (\bar p \bar q)^{a-1},
\end{equation}
and $\pi_0 = 1-\sum_{a =1}^K \pi_a = \frac{\bar q p + q(\bar p \bar q)^K }{1-\bar p \bar q}$. The age and distortion are calculated similar to \eqref{eq:age_s1} and \eqref{eq:dist_s1}.

\subsubsection{S3 --- Send the newest important data that has arrived more than $K$ slots ago. If there is no such data, send the oldest important one}
We set the state associated to a buffer $\bs b$ as follows: If $\bs{b}$ does not contain an important data arrived more than $K$ time slots ago, set $a = (K \wedge l(\bs{b}))-s+1$ as the state; and if there is no important data in the buffer, set $a = 0$. Note that this is exactly the same as in S1. If $\bs{b}$ does contain an important data that has arrived more than $K$ time slots ago, set $a = K+1$. Hence the state space will be $\{0,1,\ldots,K+1\}$. Similar to the analysis of S1, $\{A_i\}$ will be a Markov chain. We want to calculate the transition probabilities $p_{K+1,a}$ for $0< a < K+1$. Since at state $K+1$, an important data has arrived more than $K$ slots ago, $\bs{b}$ is of the form $\bs{b} = [v_{2},v_1,\ldots,v_1, \underbrace{\ldots}_{K}]$, where the last $K$ data need not be known. Conditioned on the speaking time $z$, the probability of ending up in state $a$ is then should be $\bar q^{z+K-a} q$. Consequently,
\begin{equation}
p_{K+1,a} = \sum_{z=1}^\infty \bar p^{z-1}p\bar q^{K-a+z}q  = \frac{pq\bar q^{K-a+1}}{1-\bar p \bar q}.
\end{equation}
Now, we aim to find $p_{a,K+1}$ for $0< a < K+1$. Since $\bs{b}$ is of the form $\bs{b} = [\ldots, v_{2}, \underbrace{\ldots}_{a-1}]$, and the last $a-1$ data need not be known, the next speaking time should be greater than $K-a+1$. Moreover, there must be at least one important data among the first $z-(K-a+1)$. Then, we obtain
\begin{equation}
p_{a,K+1} =  \sum_{z > K-a+1} \bar p^{z-1}p (1-\bar q^{z-(K-a+1)})= 
 \frac{q\bar p^{K-a+1}}{1-\bar p \bar q}.
\end{equation}

Calculation of $p_{a,a'}$, $0< a \leq a'< K+1$ is similar to the one in S1. Since $\bs{b} = [\ldots, v_{2}, \underbrace{\ldots}_{a-1}]$, the next speaking time should be greater than $a'-a$, and the first $z-(a'-a+1)$ data must be unimportant while the $z-(a'-a)^\text{th}$ data must be important. Thus,
\begin{equation}
p_{a,a'} =  \sum_{z > a'-a} \bar p^{z-1}p \bar q^{z-a'+a}q= \frac{pq\bar p^{a'-a}}{1-\bar p \bar q}.
\end{equation}
A similar analysis reveals that for $0< a' \leq a< K+1$,
\begin{equation}
p_{a,a'} =  \sum_{z = 1}^\infty \bar p^{z-1}p \bar q^{z-a'+a-1}q= \frac{pq\bar q^{a-a'}}{1-\bar p \bar q}.
\end{equation}
Finally, as we discussed in the analysis of S1, $p_{0,a} = p_{1,a}$ and $p_{a,0} = p_{a,1} \tfrac {\bar q} q$. Hence, we have found all the transition probabilities. We will verify that $\{A_i\}$ is a reversible Markov chain. The detailed balance equations for the state pairs $(a,K+1)$ with $a > 0$ require the stationary distribution $\pi$ to satisfy
\begin{equation}\label{eq: det_bal}
\pi_{K+1}p_{K+1,a} = \pi_{a}p_{a,K+1}
\end{equation}
and thus we find $\pi_a = p\big(\tfrac{\bar q}{\bar p}\big)^{K+1-a}\pi_{K+1}$ for $a>0$.  Detailed balance equations for the state pair $(0,1)$ further yield $\pi_0 = (\bar q/q)\pi_1=\tfrac{p \bar q}{q}\big(\tfrac{\bar q}{\bar p}\big)^{K}\pi_{K+1}$. One can easily verify that this choice satisfies not only \eqref{eq: det_bal} but all the detailed balance equations. Thus, we have verified the reversibility of the chain and found its stationary distribution.

Let us start the chain with the stationary distribution and let $\Delta_i := l(\bs{B}_i) - s(\bs{B}_i)$ be the instantaneous excess age at the $i^\text{th}$ speaking time. If $0\leq A_i \leq K$, then $\Delta_i = (A_i-1)\vee 0$. When $A_i = K+1$, however, $\Delta_i$ is random. It turns out that the distribution of $\Delta_i - (K-1)$ conditioned on $A_i = K+1$ is geometrically distributed. To see this, observe
\begin{equation}
\begin{split}
\Pr(\Delta_i  = K -1+ z'| A_i = K+1)&=  \frac{\Pr(\Delta_i  = K -1+ z', A_i = K+1)}{\pi_{K+1}} \\
 &= \frac 1 {\pi_{K+1}}\sum_{a = 1}^{K+1} \pi_a  \sum_{z \geq K-a+1+z'}\bar p^{z-1}p q \bar q^{z'-1}+ \frac{\pi_0}{\pi_{K+1}}\sum_{z \geq K+z'}\bar p^{z-1}p q \bar q^{z'-1} \\
 &= \frac 1 {\pi_{K+1}}\sum_{a = 1}^{K+1} \pi_a  \bar p^{K-a+z'} q \bar q^{z'-1}+ \frac{\pi_0}{\pi_{K+1}}\bar p^{K+z'-1} q \bar q^{z'-1}\\
 &= \sum_{a = 1}^{K+1} p^{\indic\{a \leq K\}}\big(\tfrac{\bar q}{\bar p}\big)^{K+1-a}  \bar p^{K-a+z'} q \bar q^{z'-1}
+ p\tfrac {\bar q} {q} (\tfrac{\bar q}{\bar p}\big)^{K}\bar p^{K+z'-1} q \bar q^{z'-1}\\
 &= (\bar p \bar q)^{z'-1}(1-\bar p \bar q).
 \end{split}
\end{equation}

Then, the average excess age is found as
\begin{equation}
\Delta_e^{(S3)}(K) = \sum_{k = 1}^K (k-1)\pi_k + \pi_{K+1} \big(\tfrac 1 {1-\bar p \bar q} + K-1\big).
\end{equation}

The average distortion is calculated the same as in \eqref{eq:dist_s1}.

\subsection{Proof of Corollary \ref{corr:closed_form} }\label{pf:closed_form}

Denote the stationary probabilities of each state $l$ as $\pi_l$ and $p_i := \Pr(Z = i) = p\bar{p}^{i-1}$ with $\bar{p} := 1-p$. Assume $\tau > N$ for the moment. The derivation for $\tau \leq N$ will easily follow.

Let $s_i^j := \sum_{l=i}^j \pi_l$ and $s_i := s_i^\infty$.
The stationary probabilities are the solution to the linear system
\begin{equation}\label{eqn:lin_sys}
\begin{split}
\pi_1 &= p_1s_1^N\\
\pi_2 &= p_2s_1^N+ p_1\pi_{N+1}\\
\ldots\\
\pi_{\tau} &= p_{\tau}s_1^N+ p_{\tau-1}\pi_{N+1} + \ldots + p_{1}\pi_{N+\tau-1}\\
\pi_{\tau+1} &= p_{\tau+1}s_1^N+ p_{\tau}\pi_{N+1} + \ldots + p_{1}s_{N+\tau}\\
\ldots\\
\pi_{\tau+j} &= p_{\tau+j}s_1^N + p_{\tau+j-1}\pi_{N+1} + \ldots + p_{j}s_{N+\tau}\\
\ldots
\end{split}
\end{equation}
Observe that 
\begin{equation}\label{eqn:stationary1}
	\pi_{j+1} = \bar{p}\pi_{j} + p\pi_{N+j},\ 1 \leq j \leq \tau-1
\end{equation}
\centerline{and}
\begin{equation}\label{eqn:stationary2}
\pi_{j+1} = \bar{p}\pi_{j},\ j \geq \tau+1.
\end{equation}
Summing up the equalities in \eqref{eqn:stationary1}, with indices up to $j+1$, we obtain
\begin{equation}
s_2^{j+1} = \bar{p}s_1^j + ps_{N+1}^{N+j}
\end{equation}
and hence
\begin{equation}
\pi_{j+1} +p s_2^j = \bar{p}\pi_1 + ps_{N+1}^{N+j}.
\end{equation}
Note that the first equation in \eqref{eqn:lin_sys} implies $\bar{p}\pi_1 = ps_2^N$ and thus we get
\begin{equation}
\pi_{j+1} +p s_2^j = ps_2^N+ ps_{N+1}^{N+j}
\end{equation}
\centerline{and}
\begin{equation}
\bar{p}\pi_{j+1} = ps_{j+2}^{N+j},\ 1 \leq j \leq \tau-1,
\end{equation}
which implies for $j = \tau-1$
\begin{equation}
\pi_{\tau} = p\sum_{k= 1}^{N+\tau-1}\pi_{\tau+k}  = \frac{\pi_{\tau+1}(1-\bar{p}^{N-1})}{\bar{p}}
\end{equation}
where the last equality follows from \eqref{eqn:stationary2}.
Now repeated application of \eqref{eqn:stationary1} gives
\begin{equation}\label{eqn:iterative_tau}
\pi_{\tau-j} = \pi_{\tau+1}\frac{1-(1+jp)\bar{p}^{N-1}}{\bar{p}^{j+1}},\ 0 \leq j \leq N-1.
\end{equation}
Our aim is now to find all stationary probabilities in terms of $\pi_{\tau+1}$. With the above, we are able to find $\pi_{\tau-j}$, $0 \leq j \leq N-1$ in terms of $\pi_{\tau+1}$. For $j > N-1$, we try to observe a pattern. First, try to calculate $\pi_{\tau-N}$ by  \eqref{eqn:stationary1}, which gives
\begin{equation}
\pi_{\tau-N} = \pi_{\tau+1}\frac{1-(1+Np)\bar{p}^{N-1} + p \bar{p}^{2(N-1)}}{\bar{p}^{N+1}}.
\end{equation}
Once more, repeated application of \eqref{eqn:stationary1} gives 
\begin{equation}
\begin{split}
\pi_{\tau-N-j}=\pi_{\tau+1}\frac{1-(1+(N+j)p)\bar{p}^{N-1}}{\bar{p}^{N+j+1}}+ \pi_{\tau+1} \frac{p(\sum_{k=0}^j(1+kp)) \bar{p}^{2(N-1)}}{\bar{p}^{N+j+1}}
\end{split}
\end{equation}
for $0 \leq j \leq N-1$. Doing the same procedure, we observe the following pattern: Let $S_j^{(0)} := (1+jp)$ and $S_j^{(n)} := \sum_{k=0}^j S_k^{(n-1)}$ for $n \geq 1$. Also let $S_j^{(n)} = 0$ for $j < 0$. 
Then,
\begin{equation}
\pi_{\tau-j} = \pi_{\tau + 1}\frac{1+\sum_{k=0}^{\lceil\tau/N\rceil}(-1)^{k+1}S_{j-kN}^{(k)}p^k\bar{p}^{(k+1)(N-1)}}{\bar{p}^{j+1}}.
\end{equation}
Finally, since the probabilities sum up to one, we have 
\begin{equation}
\begin{split}
1&= s_1^{\tau} + s_{\tau+1} = s_1^{\tau} + \frac{\pi_{\tau+1}}{p} \\
&= \pi_{\tau + 1}\sum_{j=0}^{\tau-1}\frac{1+\sum_{k=0}^{\lceil\tau/N\rceil}(-1)^{k+1}S_{j-kN}^{(k)}p^k\bar{p}^{(k+1)(N-1)}}{\bar{p}^{j+1}}+ \frac{\pi_{\tau+1}}{p}
\end{split}
\end{equation}
and therefore
\begin{equation}
\pi_{\tau + 1} = \bigg[\sum_{j=0}^{\tau-1}\frac{1+\sum_{k=0}^{\lceil\tau/N\rceil}(-1)^{k+1}S_{j-kN}^{(k)}p^k\bar{p}^{(k+1)(N-1)}}{\bar{p}^{j+1}} + \frac 1 p\bigg]^{-1}.
\end{equation}
After calculating all $\pi_j$s, it is straightforward to obtain expressions for $\Delta_e$ and $D$ as
\begin{equation}
\Delta_e = \sum_{j=1}^{\tau-1} j\pi_{N+j} + \tau \sum_{j = \tau}^\infty \pi_{N+j} = \sum_{j=1}^{\tau-1} j\pi_{N+j} + \frac{\tau\pi_{\tau+1}\bar{p}^{N-1}}{p}
\end{equation}
\centerline{and}
\begin{equation}
D = \mu_V \sum_{j = 1}^{\infty}j\pi_{\tau+N+j} = \frac{\mu_V \pi_{\tau+1}\bar{p}^{N}}{p^2}.
\end{equation}

\end{document}